\documentclass{llncs}

\usepackage{makeidx}  
\usepackage{times}
\usepackage{xspace}
\usepackage{amsmath, amssymb, latexsym}
\usepackage{bbm}
\usepackage{stmaryrd}
\usepackage{color}
\usepackage{dsfont}
\usepackage{microtype}
\usepackage{boxedminipage}
\usepackage{smallsec}
\usepackage{fullpage}

\def\squarebox#1{\hbox to #1{\hfill\vbox to #1{\vfill}}}
\renewcommand{\qed}{\hspace*{\fill}
       \vbox{\hrule\hbox{\vrule\squarebox{.667em}\vrule}\hrule}\smallskip}
\renewenvironment{proof}{\begin{trivlist}
\item[\hspace{\labelsep}{\bf\noindent Proof: }]
}{\qed\end{trivlist}}

\bibliographystyle{abbrv}




\newcommand{\shaull}[1]{}
\newcommand{\short}[1]{}
\newcommand{\set}[1]{{\left\{#1\right\}}}
\newcommand{\Nat}{\mathbbm{N}}
\newcommand{\Rat}{\mathbbm{Q}}

\newcommand{\tuple}[1]{\langle #1  \rangle}
\newcommand{\zug}[1]{\langle #1  \rangle}
\newcommand{\pair}{\tuple}
\newcommand{\sem}[1]{[\![#1]\!]}

\mathchardef\mhyphen="2D
\newcommand{\True}{\mathtt{True}}
\newcommand{\False}{\mathtt{False}}
\newcommand{\LTL}{{\ensuremath{\rm LTL}}\xspace}
\newcommand{\PLTL}{{\ensuremath{\rm PLTL}}\xspace}
\newcommand{\CTL}{{\rm CTL}}

\newcommand{\Next}{\mathsf{X}}
\newcommand{\Yest}{\mathsf{Y}}
\newcommand{\Ev}{\mathsf{F}}
\newcommand{\Alw}{\mathsf{G}}
\newcommand{\Until}{\mathsf{U}}
\newcommand{\Since}{\mathsf{S}}

\newcommand{\Release}{\mathsf{R}}

\newcommand{\parUntil}{\mathsf{O}}


\newcommand{\K}{{\mathcal K}}
\newcommand{\T}{{\mathcal T}}


\newcommand{\NGBW}{\mbox{\rm NGBA}\xspace}

\newcommand{\NBW}{\mbox{\rm NBA}\xspace}

\newcommand{\AWW}{\mbox{\rm AWA}\xspace}

\newcommand{\twAWW}{\mbox{\rm 2AWA}\xspace}

\newcommand{\gap}{\vspace*{0.25 cm}}
\newcommand{\stam}[1]{}

\newcommand{\B}{{\cal B}}

\newcommand{\D}{{\cal D}}

\newcommand{\A}{{\cal A}}
\newcommand{\M}{{\cal M}}


\newcommand{\Competence}{\triangledown}

\definecolor{gray}{RGB}{127,127,127}

\newcommand{\factorU}{\Competence}

\newcommand{\avg}[1]{\oplus_{#1}}

\renewcommand{\phi}{\varphi}

\newcommand{\maxs}[1]{\max\set{#1}}

\newcommand{\Paragraph}[1]{\paragraph*{#1.}}
\newcommand{\ST}{ : \:}

\newcommand{\DLTL}{\ensuremath{\LTL^{\text{disc}}[\D]}}
\newcommand{\DLTLE}{\ensuremath{\LTL^{\text{disc}}[E]}}
\newcommand{\LTLD}{\DLTL}
\newcommand{\DPLTL}{\ensuremath{\PLTL^{\text{disc}}[\D]}}
\newcommand{\AvDLTL}{\ensuremath{\LTL^{\text{disc$\avg{}$}}[\D]}}

\newcommand{\sep}{~ | ~}

\newcommand{\df}{\eta}
\newcommand{\pos}[1]{{#1}^{+}}
\newcommand{\notone}[1]{{#1}^{<1}}

\newcommand{\spb}{\hspace*{-0.42cm}}

\begin{document}

\belowdisplayskip=3pt
\abovedisplayskip=3pt
\title{Discounting in LTL}
\titlerunning{Discounting in LTL}
\author{Shaull Almagor\inst{1} \and Udi Boker\inst{2} \and Orna Kupferman\inst{1}}
\institute{The Hebrew University, Jerusalem, Israel. \and The Interdisciplinary Center, Herzliya, Israel.}
\maketitle

\begin{abstract}
In recent years, there is growing need and interest in formalizing and reasoning about the quality of software and hardware systems. As opposed to traditional verification, where one handles the question of whether a system satisfies, or not, a given specification, reasoning about quality addresses the question of \emph{how well} the system satisfies the specification. One direction in this effort is to refine the ``eventually'' operators of temporal logic to {\em discounting operators}: the satisfaction value of a specification is a value in $[0,1]$, where the longer it takes to fulfill eventuality requirements, the smaller the satisfaction value is.

In this paper we introduce an augmentation by discounting of Linear Temporal Logic (LTL), and study it, as well as its combination with propositional quality operators. We show that one can augment LTL with an arbitrary set of discounting functions, while preserving the decidability of the model-checking problem. Further augmenting the logic with unary propositional quality operators preserves decidability, whereas adding an average-operator makes some problems undecidable.
We also discuss the complexity of the problem, as well as various extensions.
\end{abstract}

\section{Introduction}
\label{sec:introduction}
One of the main obstacles to the development of complex hardware and software systems lies in ensuring their correctness. A successful paradigm addressing this obstacle is {\em temporal-logic model checking\/} -- given a mathematical model of the system and a temporal-logic formula that specifies a desired behavior of it, decide whether the model satisfies the formula \cite{CGP99}.
Correctness is Boolean: a system can
either satisfy its specification or not satisfy it. The richness
of today's systems, however, justifies specification formalisms that
are {\em multi-valued}. The multi-valued setting arises directly in systems
with quantitative aspects (multi-valued / probabilistic / fuzzy) \cite{DR09b,DV12,FLS08,Kwi07,MLL04}, but is applied also with respect to Boolean systems, where it origins from the semantics of the specification formalism itself 
\cite{ABK13,AFHMS05}.

When considering the \emph{quality} of a system,
satisfying a specification should no longer be a yes/no matter. Different ways of satisfying a specification should induce different levels of quality, which should be reflected in the output of the verification procedure. Consider for example the specification
$\Alw({\it  request} \rightarrow \Ev({\it response\_grant} \vee {\it response\_deny}))$ (``every request is eventually responded, with either a grant or a denial''). There should be a difference between a computation that satisfies it with responses generated soon after requests and one that satisfies it with long waits. Moreover, there may be a difference between grant and deny responses, or cases in which no request is issued.
The issue of generating high-quality hardware and software systems attracts a lot of attention \cite{Kan02,Spi06}. Quality, however, is traditionally viewed as an art, or as an amorphic ideal.  
In \cite{ABK13}, we introduced an approach for formalizing quality. Using it, a user can specify quality formally, according to the importance he gives to components such as security, maintainability, runtime, and more, and then can formally reason about the quality of software.

As the example above demonstrates, we can distinguish between two 
aspects 
of the quality of satisfaction. The first, to which we refer as ``temporal quality'' concerns the waiting time to satisfaction of eventualities. The second, to which we refer as ``propositional quality'' concerns prioritizing related components of the specification. 
Propositional quality was studied in \cite{ABK13}. In this paper we study temporal quality as well as the combinations of both aspects. One may try to reduce temporal quality to propositional quality by a repeated use of the $\Next$ (``next") operator or by a use of  bounded (prompt) eventualities \cite{AHK10,BC06}. Both approaches, however, partitions the future into finitely many zones and are limited: correctness of $\LTL$ is Boolean, and thus has inherent dichotomy between satisfaction and dissatisfaction. On the other hand, the distinction between ``near'' and ``far'' is not dichotomous.
\stam{
One may try to reduce ``temporal quality'' to ``propositional quality'', using the fact that an eventuality involves a repeated choice between satisfying it in the present or delaying its satisfaction to the strict future. This attempt, however, requires unboundedly many applications of the propositional choice, and is similar to a repeated use of the $\Next$ (``next") operator rather than a use of eventuality operators.
Repeated use of $\Next$ is a limited solution, 
as it partitions the future into finitely many zones, all of which are in the ``near future'', except for a single, unbounded, ``far future''. A more involved approach to distinguish between the ``near'' and ``far'' future  includes bounded (prompt) eventualities \cite{AHK10,BC06}. There, one distinguishes between eventualities whose waiting time is bounded and ones that have no bound.

The weakness of both approaches is not surprising -- correctness of $\LTL$ is Boolean, and thus has inherent dichotomy between satisfaction and dissatisfaction. The distinction between ``near'' and ``far'', however, is not dichotomous.
} 
This suggests that in order to formalize temporal quality, one must extend $\LTL$ to an unbounded setting. Realizing this, researchers have suggested to augment temporal logics with {\em future discounting} \cite{AHM03}. In the discounted setting, the satisfaction value of specifications is a numerical value, and it depends, according to some discounting function, on the time waited for eventualities to get satisfied.

In this paper we add discounting to Linear Temporal Logic (\LTL), and study it, as well as its combination with propositional quality operators.
We introduce $\DLTL$ -- an augmentation by discounting of \LTL. The logic $\DLTL$ is actually a family of logics, each parameterized by a set $\D$ of discounting functions --  strictly decreasing functions from $\Nat$ to $[0,1]$ that tend to $0$
(e.g., linear decaying, exponential decaying, etc.). $\DLTL$ includes a discounting-``until'' ($\Until_{\eta}$) 
operator, parameterized by a function $\eta \in \D$. We solve the model-checking threshold problem for $\DLTL$: given a Kripke structure $\K$, an $\DLTL$ formula $\varphi$ and a threshold $t\in[0,1]$, the algorithm decides whether the satisfaction value of $\varphi$ in $\K$ is at least $t$.

In the Boolean setting, the automata-theoretic approach has proven to be very useful in reasoning about \LTL specifications. The approach is based on translating \LTL formulas to nondeterministic B\"uchi automata on infinite words \cite{VW86b}.
Applying this approach to the discounted setting, which gives rise to infinitely many satisfaction values, poses a big algorithmic challenge: model-checking algorithms, and in particular those that follow the automata-theoretic approach, are based on an exhaustive search, which cannot be simply applied when the domain becomes infinite.
A natural relevant extension to the automata-theoretic approach is to translate formulas to {\em weighted automata} \cite{Moh97}. Unfortunately, these extensively-studied models are complicated and many problems become undecidable for them \cite{Kro94}.
We show that for threshold problems, we can translate $\DLTL$ formulas into (Boolean) nondeterministic B\"uchi automata, with the property that the automaton accepts a lasso computation iff the formula attains a value above the threshold on that computation.
Our algorithm relies on the fact that the language of an automaton is non-empty iff there is a lasso witness for the non-emptiness.
We cope with the infinitely many possible satisfaction values by using the discounting behavior of the eventualities and the given threshold in order to partition the state space into a finite number of classes.
The complexity of our algorithm depends on the discounting functions used in the formula.
We show that for standard discounting functions, such as exponential decaying, the problem is PSPACE-complete -- not more complex than standard \LTL. 
The fact our algorithm uses Boolean automata also enables us to suggest a solution for threshold satisfiability, and to give a partial solution to threshold synthesis. In addition, it allows to adapt the heuristics and tools that exist for Boolean automata.

Before we continue to describe our contribution, let us review existing work on discounting.
The notion of discounting has been studied in several fields, such as economy, game-theory, and Markov decision processes \cite{Sha53}. In the area of formal verification, it was suggested in
\cite{AHM03} to augment the $\mu$-calculus with discounting operators. The discounting suggested there is exponential; that is, with each iteration, the satisfaction value of the formula decreases by a multiplicative factor in $(0,1]$. Algorithmically, \cite{AHM03} shows how to evaluate discounted $\mu$-calculus formulas with arbitrary precision. Formulas of $\LTL$ can be translated to the $\mu$-calculus, thus \cite{AHM03} can be used in order to approximately model-check discounted-$\LTL$ formulas. However, the translation from $\LTL$ to the $\mu$-calculus involves an exponential blowup \cite{Dam94} (and is complicated), making this approach inefficient. Moreover, our approach allows for arbitrary discounting functions, and the algorithm returns an exact solution to the threshold model-checking problem, which is more difficult than the approximation problem.

Closer to our work is \cite{AFHMS05}, where $\CTL$ is augmented with discounting and weighted-average operators.  The motivation in~\cite{AFHMS05} is to introduce a logic whose semantics is not too sensitive to small perturbations in the model. Accordingly, formulas are evaluated on weighted-systems or on Markov-chains. Adding discounting and weighted-average operators to $\CTL$ preserves its appealing complexity, and the model-checking problem for the augmented logic can be solved in polynomial time. As is the case in the traditional, Boolean, semantics, the expressive power of discounted $\CTL$ is limited.
The fact the same combination, of discounting and weighted-average operators, leads to undecidability in the context of LTL witnesses the technical challenges of the $\DLTL$ setting.  

Perhaps closest to our approach is~\cite{Man12}, where a version of discounted-\LTL was introduced. Semantically, there are two main differences between the logics. The first is that ~\cite{Man12} uses discounted sum, while we interpret discounting without accumulation, and the second is that the discounting there replaces the standard temporal operators, so all  eventualities are discounted. As discounting functions tend to $0$, this strictly restricts the expressive power of the logic, and one cannot specify traditional eventualities in it. On the positive side, it enables a clean algebraic characterization of the semantics, and indeed the contribution in \cite{Man12} is a comprehensive study of the mathematical properties of the logic. Yet, ~\cite{Man12} does not study algorithmic questions
about to the logic. We, on the other hand, focus on the algorithmic properties of the logic, and specifically on the model-checking problem.

Let us now return to our contribution. After introducing $\DLTL$ and studying its model-checking problem, we augment $\DLTL$ with propositional quality operators. Beyond the operators $\min$, $\max$, and $\neg$, which are already present, two basic propositional quality operators are the multiplication of an $\DLTL$ formula by a constant in $[0,1]$, and the averaging between the satisfaction values of two $\DLTL$ formulas \cite{ABK13}. We show that while the first extension does not increase the expressive power of $\DLTL$ or its complexity, the latter causes 
some problems (e.g. validity) to become undecidable. In fact, things become undecidable even if we allow averaging in combination with a single discounting function. Recall that this is in contrast with the extension of discounted $\CTL$ with an average operator, where the complexity of the model-checking problem stays polynomial \cite{AFHMS05}.

We consider additional extensions of $\DLTL$. First, we study a variant of the discounting-eventually operators in which we allow the discounting
to tend to arbitrary values in $[0,1]$ (rather than to $0$). This captures the intuition that we are not always pessimistic about the future, but can be, for example, ambivalent about it, by tending to $\frac{1}{2}$. We show that all our results hold under this extension.
Second, we add to $\DLTL$ {\em past\/} operators and their discounting versions (specifically, we allow a discounting-``since" operator, and its dual). In the traditional semantics, past operators enable clean specifications of many interesting properties, make the logic exponentially more succinct, and can still be handled within the same complexity bounds \cite{LS94,LPZ85}. We show that the same holds for the discounted setting.
Finally, we show how $\DLTL$ and algorithms for it can be used also for reasoning about weighted systems.

Due to lack of space, the full proofs appear in the appendix.

\section{The Logic \DLTL}
The linear temporal logic $\DLTL$ generalizes $\LTL$ by adding discounting temporal operators. The logic is actually a family of logics, each parameterized by a set $\D$ of discounting functions.

Let $\Nat=\set{0,1,...}$. A function $\df: \Nat\to [0,1]$ is a {\em discounting function} if $\lim_{i\to \infty}\df(i)=0$, and $\df$ is strictly monotonic-decreasing. Examples for natural discounting functions are $\df(i)=\lambda^i$, for some $\lambda\in (0,1)$, and $\df(i)=\frac{1}{i+1}$.

Given a set of discounting functions $\D$, we define the logic $\DLTL$ as follows. The syntax of $\DLTL$ adds to $\LTL$ the operator $\phi\Until_\df \psi$ (discounting-Until),
for every function $\df\in \D$.
Thus, the syntax is given by the following grammar, where $p$ ranges over the set $AP$ of atomic propositions and $\df\in \D$.
$$\phi:= \True \sep p \sep \neg \phi \sep \phi\vee \phi \sep  \Next \phi \sep \phi\Until \phi \sep \phi\Until_\df \phi.$$

The semantics of $\DLTL$ is defined with respect to a {\em computation\/} $\pi=\pi^0,\pi^1,\ldots \in (2^{AP})^\omega$. Given a computation $\pi$ and an $\DLTL$ formula $\varphi$, the truth value of $\varphi$ in $\pi$ is a value in $[0,1]$, denoted $\sem{\pi,\phi}$. The value is defined by induction on the structure of $\varphi$ as follows, where $\pi^i=\pi_i,\pi_{i+1},\ldots$.
\begin{itemize}
\item $\sem{\pi,\True}=1$. \hspace{2.98cm} \labelitemi ~$\sem{\pi, \phi\vee \psi}=\maxs{\sem{\pi, \phi},\sem{\pi,\psi}}$.
\item $\sem{\pi,p}=
\begin{cases}
1 & \mbox{ if }p\in \pi_0,\\
0 & \mbox{ if } p\notin \pi_0.
\end{cases}$  \hspace{1.5cm} \labelitemi ~$\sem{\pi,\neg \phi}=1-\sem{\pi, \phi}$.
\item $\sem{\pi,\Next \phi}=\sem{\pi^1,\phi}$.
\item $\sem{\pi, \varphi \Until \psi}  = \sup\limits_{i\ge 0} \{ \min \{\sem{\pi^i,\psi},  \min\limits_{0\leq j < i}\{ \sem{\pi^j,\varphi}\}\}\}$.
\item $\sem{\pi, \varphi \Until_{\df} \psi}  = \sup\limits_{i\ge 0} \{ \min \{\df(i)\sem{\pi^i,\psi},  \min\limits_{0\leq j < i}\{\df(j) \sem{\pi^j,\varphi}\}\}\}$.

\end{itemize}

The intuition is that events that happen in the future have a lower influence,
 and the rate by which this influence decreases depends on the function $\df$.
\footnote{Observe that in our semantics the satisfaction value of future events tends to $0$. One may think of scenarios where future events are discounted towards another value in $[0,1]$ (e.g. discounting towards $\frac12$ as ambivalence regarding the future). We address this in Section~\ref{sec:DiscountingTendency}.} For example, the satisfaction value of a formula $\varphi \Until_{\df} \psi$ in a computation $\pi$ depends on the best (supremum) value that $\psi$ can get along the entire computation, while considering the discounted satisfaction of $\psi$ at a position $i$, as a result of multiplying it by $\df(i)$,
and the same for the value of $\varphi$ in the prefix leading to the $i$-th position.

We add the standard abbreviations $\Ev \phi\equiv \True\Until \phi$, and $\Alw \phi=\neg \Ev \neg \phi$, as well as their quantitative counterparts: $\Ev_\df \phi\equiv \True\Until_\df \phi$, and $\Alw_\df \phi=\neg \Ev_\df \neg \phi$.
We denote by $|\phi|$ the number of subformulas of $\phi$. 

A computation of the form $\pi=u\cdot v^\omega$, for $u,v\in (2^{AP})^*$, with $v\neq \epsilon$, is called a {\em lasso computation}. 
We observe that since a specific lasso computation has only finitely many distinct suffixes, the $\inf$ and $\sup$ in the semantics of $\DLTL$ can be replaced with $\min$ and $\max$, respectively, when applied to lasso computations.

The semantics is extended to {\em Kripke structures} by taking the path that admits the lowest satisfaction value. Formally, for a Kripke structure $\K$ and an $\DLTL$ formula $\phi$ we have that $\sem{\K,\phi}=\inf\set{\sem{\pi,\phi}\ST \pi \text{ is a computation of $\K$} }$.

\begin{example}
Consider a lossy-disk: every moment in time there is a chance that some bit would flip its value. Fixing flips is done by a global error-correcting procedure. This procedure manipulates the entire content of the disk, such that initially it causes more errors in the disk, but the longer it runs, the more bits it fixes. 

Let ${\it init}$ and ${\it terminate}$ be atomic propositions indicating when the error-correcting procedure is initiated and terminated, respectively.
The quality of the disk (that is, a measure of the amount of correct bits) can be specified by the formula $\phi=\Alw\Ev_\eta({\it init} \wedge \neg \Ev_\mu {\it terminate})$ for some appropriate discounting functions $\eta$ and $\mu$. 
Intuitively, $\phi$ gets a higher satisfaction value the shorter the waiting time is between initiations of the error-correcting procedure, and the longer the procedure runs (that is, not terminated) in between these initiations. Note that the ``worst case'' nature of $\LTLD$ fits here. For instance, running the procedure for a very short time, even once, will cause many errors.
\end{example}

\section{$\DLTL$ Model Checking}
\label{sec:alg proc}
In the Boolean setting, the model-checking problem asks, given an $\LTL$ formula $\phi$ and a Kripke structure $\K$, whether $\sem{\K,\phi}=\True$. In the quantitative setting, the model-checking problem is to compute $\sem{\K,\phi}$, where $\phi$ is now an $\DLTL$ formula. A simpler version of this problem is the threshold model-checking problem: given $\phi$, $\K$, and a threshold $v\in [0,1]$, decide whether $\sem{\K,\phi}\ge v$. In this section we show how we can solve the latter.

Our solution uses the automata-theoretic approach, and consists of the following steps. We start by translating $\phi$ and $v$ to an alternating weak automaton $\A_{\phi,v}$ such that $L(\A_{\phi,v})\neq \emptyset$ iff there exists a computation $\pi$ such that $\sem{\pi,\phi}>v$.
The challenge here is that $\phi$ has infinitely many satisfaction values, naively implying an infinite-state automaton. We show that using the threshold and the discounting behavior of the eventualities, we can restrict attention to a finite resolution of satisfaction values, enabling the construction of a finite automaton.  Complexity-wise, the size of $\A_{\phi,v}$ depends on the functions in $\D$. In Section~\ref{subsec:ExpDiscounting}, we analyze the complexity for the case of exponential-discounting functions.

The second step is to construct a nondeterministic B\"uchi automaton $\B$ that is equivalent to $\A_{\phi,v}$. In general, alternation removal might involve an exponential blowup in the state space \cite{MH84}.
We show, by a careful analysis of $\A_{\phi,v}$, that we can remove its alternation while only having a polynomial state blowup.

We complete the model-checking procedure by composing the nondeterministic B\"uchi automaton $\B$ with the Kripke structure $\K$, as done in the traditional, automata-based, model-checking procedure.

The complexity of model-checking an $\LTLD$ formula depends on the discounting functions in $\D$. Intuitively, the faster the discounting tends to 0, the less states there will be. For exponential-discounting, we show that the complexity is NLOGSPACE in the system (the Kripke structure) and PSPACE in the specification (the $\LTLD$ formula and the threshold), staying in the same complexity classes of standard \LTL model-checking.

We conclude the section by showing how to use the generated nondeterministic B\"uchi automaton for 
addressing threshold satisfiability and synthesis.

\subsection{Alternating Weak Automata}

For a given set $X$, let ${\cal B}^{+}(X)$ be the set of positive Boolean formulas over $X$ (i.e., Boolean formulas built from elements in $X$
using $\wedge$ and $\vee$), where we also allow the formulas $\True$ and $\False$. For $Y \subseteq X$, we say that $Y$ {\em satisfies\/} a
formula $\theta \in {\cal B}^+(X)$ iff the truth assignment that assigns {\em true} to the members of $Y$ and assigns {\em false} to the
members of $X \setminus Y$ satisfies $\theta$.
An {\em alternating B\"uchi automaton on infinite words\/} is a tuple
$\A=\tuple{\Sigma, Q, q_{in}, \delta, \alpha}$, where $\Sigma$
is the input alphabet, $Q$ is a finite set of states, $q_{in} \in Q$ is an initial state, $\delta : Q \times \Sigma \rightarrow {\cal B}^+(Q) $
is a transition function, and $\alpha \subseteq Q$ is a set of accepting states. We define runs of $\A$ by means of (possibly) infinite {\sc dag}s
(directed acyclic graphs).  A run of $\A$ on a word
$w=\sigma_0 \cdot \sigma_1 \cdots \in \Sigma^\omega$ is a (possibly) infinite {\sc dag}  ${\cal G}=\pair{V,E}$ satisfying the following (note that there may be several runs
of $\A$ on $w$).
\begin{itemize}
\item
$V \subseteq Q \times \Nat$ is as follows. Let $Q_l \subseteq Q$ denote
 all states
 in level $l$. Thus, $Q_l=\{q: \pair{q,l} \in V\}$. Then,
$Q_0=\{q_{in}\}$, and $Q_{l+1}$ satisfies $\bigwedge_{q \in Q_l}
 \delta(q,\sigma_l)$.
\item For every $l\in \Nat$, $Q_l$ is minimal with respect to containment.
\item
$E \subseteq \bigcup_{l \geq 0} (Q_l \times \{l\}) \times (Q_{l+1} \times \{l+1\})$ is such that for every state $q\in Q_l$, the set $\set{q'\in Q_{l+1}: E(<q,l>,<q',l+1>)}$ satisfies $\delta(q,\sigma_l)$.
\end{itemize}
Thus, the root of the {\sc dag} contains the initial state of the automaton, and the states associated with nodes in level $l+1$ satisfy the
transitions from states corresponding to nodes in level $l$.
The run ${\cal G}$ accepts the word $w$ if all its infinite paths satisfy the acceptance condition $\alpha$. Thus, in the case of B\"uchi automata,
all the infinite paths have infinitely many nodes $\zug{q,l}$ such that $q\in \alpha$
(it is not hard to prove that every infinite path in ${\cal G}$ is part of an infinite path starting in level $0$).
A word $w$ is accepted by $\A$ if there is a run that accepts it. The language of ${\cal A}$,
denoted $L({\cal A})$, is the set of infinite words that ${\cal A}$ accepts.

When the formulas in the transition function of $\A$ contain only disjunctions, then $\A$ is  nondeterministic, and its runs are {\sc dag}s of width 1, where at each level there is a single node. 

The alternating automaton $\A$ is {\em weak\/}, denoted $\AWW$, if its state space $Q$ can be partitioned into sets $Q_1,\ldots,Q_k$, such that the following hold: First, for every $1\le i\le k$ either $Q_i\subseteq \alpha$, in which case we say that $Q_i$ is an accepting set,  or $Q_i\cap \alpha=\emptyset$, in which case we say that $Q_i$ is rejecting. Second, there is a partial-order $\leq$ over the sets, and for every $1\le i, j\le k$, if $q\in Q_i$, $s\in Q_j$, and $s \in \delta(q,\sigma)$ for some $\sigma\in \Sigma$, then $Q_j \leq Q_i$. Thus, transitions can lead only to states that are smaller in the partial order. Consequently, each run of an $\AWW$ eventually gets trapped in a set $Q_i$ and is accepting iff this set is accepting. 

\subsection{From $\DLTL$ to $\AWW$}\label{sec:DltlToAww}
Our model-checking algorithm is based on translating an $\DLTL$ formula $\phi$ to an $\AWW$.
Intuitively, the states of the $\AWW$ correspond to assertions of the form $\psi>t$ or $\psi<t$ for every subformula $\psi$ of $\phi$, and for certain thresholds $t\in [0,1]$. 
A lasso computation is then accepted from state $\psi>t$ iff $\sem{\pi,\psi}>t$. The assumption about the computation being a lasso is needed only for the ``only if'' direction, and it does not influence the proof's generality since the language of an automaton is non-empty iff there is a lasso witness for its non-emptiness. By setting the initial state to $\phi>v$, we are done. 

Defining the appropriate transition function for the $\AWW$ follows the semantics of $\LTLD$ in the expected manner. A naive construction, however, yields an infinite-state automaton (even if we only expand the state space on-the-fly, as discounting formulas can take infinitely many satisfaction values).
As can be seen in the proof of Theorem~\ref{thm:DLTL to AWW}, the ``problematic'' transitions are those that involve the discounting operators. The key observation is that, given a threshold $v$ and a computation $\pi$, when evaluating a discounted operator on $\pi$, one can restrict attention to two cases: either the satisfaction value of the formula goes below $v$, in which case this happens after a bounded prefix, or the satisfaction value always remains above $v$, in which case we can replace the discounted operator with a Boolean one. This observation allows us to expand only a finite number of states on-the-fly.

Before describing the construction of the $\AWW$, we need the following lemma, which reduces an extreme satisfaction of an $\DLTL$ formula, meaning satisfaction with a value of either $0$ or $1$, to a Boolean satisfaction of an LTL formula. The proof proceeds by induction on the structure of the formulas.
\begin{lemma}
\label{lem:LTL for positive}
Given an $\DLTL$ formula $\phi$, there exist $\LTL$ formulas $\pos{\phi}$ and $\notone{\phi}$ such that $|\pos{\phi}|$ and $|\notone{\phi}|$ are both $O(|\phi|)$ and the following hold for every computation $\pi$.
\begin{enumerate}
\item If $\sem{\pi,\phi}>0$ then $\pi\models \pos{\phi}$, and if $\sem{\pi,\phi}<1$ then $\pi\models \notone{\phi}$.
\item If $\pi$ is a lasso, then if $\pi\models \pos{\phi}$ then $\sem{\pi,\phi}> 0$ and if $\pi\models \notone{\phi}$ then $\sem{\pi,\phi}<1$.
\end{enumerate} 
\end{lemma}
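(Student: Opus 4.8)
The plan is to prove the lemma by induction on the structure of $\phi$, constructing $\pos{\phi}$ and $\notone{\phi}$ simultaneously so that both directions of the induction are available at each step. The base cases are immediate: for $\phi=\True$ set $\pos{\phi}=\True$ and $\notone{\phi}=\False$; for $\phi=p$ set $\pos{p}=p$ and $\notone{p}=\neg p$, since $\sem{\pi,p}\in\set{0,1}$ exactly tracks whether $p\in\pi_0$. The interesting work is in the inductive step, where I would handle each operator using the semantic clauses. For negation I would exploit the duality $\sem{\pi,\neg\phi}=1-\sem{\pi,\phi}$, which swaps the roles of the two formulas: set $\pos{(\neg\phi)}=\notone{\phi}$ and $\notone{(\neg\phi)}=\pos{\phi}$, because $\sem{\pi,\neg\phi}>0$ iff $\sem{\pi,\phi}<1$, and symmetrically. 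For disjunction, since $\sem{\pi,\phi\vee\psi}=\maxs{\sem{\pi,\phi},\sem{\pi,\psi}}$, I would take $\pos{(\phi\vee\psi)}=\pos{\phi}\vee\pos{\psi}$ (the max is positive iff one disjunct is positive) and $\notone{(\phi\vee\psi)}=\notone{\phi}\wedge\notone{\psi}$ (the max is below $1$ iff both are below $1$).

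For the temporal operators I would follow the same pattern. For $\Next$, since $\sem{\pi,\Next\phi}=\sem{\pi^1,\phi}$, simply set $\pos{(\Next\phi)}=\Next\pos{\phi}$ and $\notone{(\Next\phi)}=\Next\notone{\phi}$. The standard until $\phi\Until\psi$ has a supremum-of-min semantics, so its value is positive iff at some position $i$ both $\sem{\pi^i,\psi}>0$ and $\sem{\pi^j,\phi}>0$ for all $j<i$; this translates to $\pos{(\phi\Until\psi)}=\pos{\phi}\Until\pos{\psi}$. The condition for being below $1$ is the dual: every witnessing position fails, which after unwinding the min/sup structure gives $\notone{(\phi\Until\psi)}=\notone{\phi}\Release\notone{\psi}$ using the release dual (expressible in \LTL). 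The crucial case is the discounting until $\phi\Until_{\df}\psi$: here I would use that $\df(i)>0$ for every $i$ (each value is strictly positive even though $\df$ tends to $0$), so multiplying by $\df(i)$ does not change whether a quantity is positive. Hence $\df(i)\sem{\pi^i,\psi}>0$ iff $\sem{\pi^i,\psi}>0$, and the whole discounted supremum is positive iff the corresponding \emph{undiscounted} until is positive, giving $\pos{(\phi\Until_{\df}\psi)}=\pos{\phi}\Until\pos{\psi}$ — the discounting simply disappears for the extreme value $0$. For $\notone{(\phi\Until_{\df}\psi)}$, I would observe that $\df(i)\sem{\pi^i,\psi}<1$ automatically whenever $\df(i)<1$, which holds for all $i\ge 1$ since $\df$ is strictly decreasing to $0$; the only position that can contribute value $1$ is $i=0$ with $\df(0)=1$, so the formula is below $1$ iff it is below $1$ at position $0$, which reduces to $\notone{\psi}$ (together with the appropriate $\notone{\phi}$ bookkeeping).

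Throughout, the size bounds $|\pos{\phi}|,|\notone{\phi}|=O(|\phi|)$ follow because each inductive clause introduces only a constant number of operators around the subformula translations, so the translation is linear in $|\phi|$. The two numbered statements of the lemma are exactly the two directions of each biconditional I am establishing: part (1) is the ``soundness'' direction that holds for all computations, while part (2) is the converse that I only claim for lasso computations. The reason the lasso restriction is needed in part (2) is that for general computations the $\sup$ in the until semantics may not be attained — a value could approach a threshold without reaching it — whereas for a lasso computation the finitely many distinct suffixes let us replace $\sup$ by $\max$ (as observed in the excerpt), so a syntactic \LTL witness for, say, $\pos{\phi}$ genuinely forces the supremum to be strictly positive rather than merely nonnegative.

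The main obstacle I anticipate is getting the discounting-until case for $\notone{\phi}$ exactly right, since one must carefully separate the $i=0$ term (where $\df(0)=1$ allows a value of $1$) from all later terms (where $\df(i)<1$ caps the contribution strictly below $1$), and verify that the resulting \LTL formula correctly captures ``the value is strictly below $1$'' in both directions; the interplay between the strict monotonicity of $\df$ and the strictness of the inequalities is where a naive construction could introduce an off-by-one error. The until case for the standard operator, and ensuring the release dual is correctly expressed so that part (2) holds on lassos, is the secondary delicate point.
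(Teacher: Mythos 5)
Your proposal is correct and follows essentially the same route as the paper: structural induction with $\pos{\cdot}$ and $\notone{\cdot}$ built simultaneously, the negation case swapping the two, $\notone{(\phi\Until\psi)}$ expressed via the release dual, $\pos{(\phi\Until_\df\psi)}=\pos{\phi}\Until\pos{\psi}$ because $\df(i)>0$, and the lasso restriction used exactly where the $\sup$ must be attained as a $\max$. The only nitpick is in the $\notone{(\phi\Until_\df\psi)}$ case: no ``$\notone{\phi}$ bookkeeping'' is needed, since the $i=0$ term of the supremum involves only $\psi$ (the inner minimum over $0\le j<0$ is vacuous), so the answer is simply $\True$ when $\df(0)<1$ and $\notone{\psi}$ when $\df(0)=1$.
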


Henceforth, given an $\LTLD$ formula $\phi$, we refer to $\pos{\phi}$ as in Lemma~\ref{lem:LTL for positive}.

Consider an $\DLTL$ formula $\phi$. By Lemma~\ref{lem:LTL for positive}, if there exists a computation $\pi$ such that $\sem{\pi,\phi}>0$, then $\pos{\phi}$ is satisfiable. Conversely, since $\pos{\phi}$ is a Boolean $\LTL$ formula, then by \cite{Var96} we know that $\pos{\phi}$ is satisfiable iff there exists a lasso computation $\pi$ that satisfies it, in which case $\sem{\pi,\phi}>0$. We conclude with the following.
\begin{corollary}
\label{cor:satisfaction}
Consider an $\LTLD$ formula $\phi$. There exists a computation $\pi$ such that $\sem{\pi,\phi}>0$ iff there exists a lasso computation $\pi'$ such that $\sem{\pi',\phi}>0$, in which case $\pi'\models \pos{\phi}$ as well.
\end{corollary}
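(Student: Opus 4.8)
The plan is to derive the corollary directly from Lemma~\ref{lem:LTL for positive} together with the classical fact that a satisfiable \LTL formula always admits a lasso (ultimately periodic) model \cite{Var96}. The statement is an ``iff'' about the existence of a computation attaining a strictly positive satisfaction value, so I would prove the two directions separately, routing the forward direction through the Boolean \LTL formula $\pos{\phi}$ and observing that the backward direction is immediate.

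For the forward direction, suppose some computation $\pi$ satisfies $\sem{\pi,\phi}>0$. By part~1 of Lemma~\ref{lem:LTL for positive} this gives $\pi\models\pos{\phi}$, so the ordinary \LTL formula $\pos{\phi}$ is satisfiable. Since $\pos{\phi}$ is Boolean, by \cite{Var96} its satisfiability is witnessed by a lasso computation, i.e.\ there is a lasso $\pi'$ with $\pi'\models\pos{\phi}$. Now I invoke part~2 of Lemma~\ref{lem:LTL for positive}, which applies precisely because $\pi'$ is a lasso, to conclude $\sem{\pi',\phi}>0$. This step simultaneously produces the required lasso and establishes the extra assertion $\pi'\models\pos{\phi}$.

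The backward direction is trivial: a lasso $\pi'$ with $\sem{\pi',\phi}>0$ is in particular a computation attaining a positive value, so the right-to-left implication holds by taking $\pi=\pi'$.

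The only genuine subtlety lies in the \emph{asymmetry} of Lemma~\ref{lem:LTL for positive}: part~1 holds for arbitrary computations, whereas part~2 is guaranteed only for lassos. The proof is arranged to use each part where it is available --- the general hypothesis feeds part~1 to reach satisfiability of $\pos{\phi}$, and the lasso supplied by \cite{Var96} is exactly the kind of computation that part~2 needs in order to convert membership in $\pos{\phi}$ back into a strictly positive satisfaction value. Consequently no generality is lost despite the quantitative-to-Boolean passage being valid, in that direction, only for lasso computations.
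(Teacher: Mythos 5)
Your proof is correct and follows essentially the same route as the paper: pass from $\sem{\pi,\phi}>0$ to satisfiability of $\pos{\phi}$ via part~1 of Lemma~\ref{lem:LTL for positive}, obtain a lasso model of $\pos{\phi}$ from \cite{Var96}, and convert back using part~2, which holds for lassos. The remark on the asymmetry of the lemma matches the paper's implicit reasoning exactly.
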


\begin{remark}
The curious reader may wonder why we do not prove that $\sem{\pi,\phi}>0$ iff $\pi\models\pos{\phi}$ for every computation $\pi$. As it turns out, a translation that is valid also for computations with no period is not always possible. For example, as is the case with the prompt-eventuality operator of \cite{KPV08}, the formula $\phi=\Alw(\Ev_\eta p)$ is such that the set of computations $\pi$ with $\sem{\pi,\phi}>0$ is not $\omega$-regular, thus one cannot hope to define an \LTL formula $\pos{\phi}$.
\end{remark}


We start with some definitions. For a function $f:\Nat\to [0,1]$ and for $k\in \Nat$, we define $f^{+k}:\Nat\to [0,1]$ as follows. For every $i\in \Nat$ we have that $f^{+k}(i)=f(i+k)$. 

Let $\phi$ be an $\DLTL$ formula over $AP$.
We define the {\em extended closure} of $\phi$, denoted $xcl(\phi)$, to be the set of all the formulas $\psi$ of the following {\em classes}:
\begin{enumerate}
\item $\psi$ is a subformula of $\phi$.
\item $\psi$ is a subformula of $\pos{\theta}$ or $\pos{\neg \theta}$, where $\theta$ is a subformula of $\phi$. 
\item $\psi$ is of the form $\theta_1\Until_{\df^{+k}} \theta_2$ for $k\in \Nat$, where $\theta_1\Until_{\df}\theta_2$ is a subformula of $\phi$.
\end{enumerate}
Observe that $xcl(\phi)$ may be infinite, and that it has both $\DLTL$ formulas (from Classes $1$ and $3$) and $\LTL$ formulas (from Class $2$).

\begin{theorem}
\label{thm:DLTL to AWW}
Given an $\DLTL$ formula $\phi$ and a threshold $v\in [0,1]$, there exists an $\AWW$ $\A_{\phi,v}$ such that for every computation $\pi$ the following hold.
\begin{enumerate}
\item If $\sem{\pi,\phi}>v$, then $\A_{\phi,v}$ accepts $\pi$.
\item If $\A_{\phi,v}$ accepts $\pi$ and $\pi$ is a lasso computation, then $\sem{\pi,\phi}> v$.
\end{enumerate} 
\end{theorem}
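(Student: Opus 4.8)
The plan is to construct the $\AWW$ $\A_{\phi,v}$ explicitly, taking its state space to be (a finite subset of) the extended closure $xcl(\phi)$ annotated with threshold markers, and then to verify the two claimed properties by reasoning about runs. Concretely, I would let each state be an assertion of the form $\psi \gtrdot t$, recording that the subformula $\psi \in xcl(\phi)$ is being tested against a threshold $t$, with the initial state being $\phi \gtrdot v$. The transition function is defined by following the semantics of $\LTLD$ case by case on the outermost operator of $\psi$: Boolean cases ($\True$, $p$, $\neg$, $\vee$) translate directly into $\True/\False$ or into Boolean combinations of successor assertions, with negation handled by dualizing the threshold comparison (using $\sem{\pi,\neg\phi}=1-\sem{\pi,\phi}$); the $\Next$ case sends $\Next\psi \gtrdot t$ to $\psi \gtrdot t$ in the next level; and the standard $\Until$ is handled as in the classical $\LTL$-to-$\AWW$ construction via its fixpoint unfolding. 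The crucial cases are the discounting operators.

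The key step is the treatment of $\theta_1 \Until_\df \theta_2 \gtrdot t$. Here I would use the observation highlighted before the theorem: unfolding the semantics at position $0$ gives a choice between satisfying the discounted $\theta_2$ now (i.e. $\df(0)\sem{\pi^0,\theta_2} > t$) or delaying, in which case the residual obligation at the next level is governed by the shifted function $\df^{+1}$, producing the successor state $\theta_1 \Until_{\df^{+1}} \theta_2 \gtrdot t$ (conjoined with a requirement on $\theta_1$). Because $\df$ is strictly decreasing to $0$, once $\df(k) \le t$ the discounted value can never again exceed $t$, so delaying past that bounded point is futile; at that point the eventuality obligation can only be met by $\theta_1$ holding forever with value staying above the threshold, which by Lemma~\ref{lem:LTL for positive} and Corollary~\ref{cor:satisfaction} is captured by a \emph{Boolean} assertion using $\pos{\cdot}$ (or $\notone{\cdot}$ for the dual) rather than by continuing to spawn new discounted states. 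This is exactly what forces only finitely many of the Class~$3$ formulas $\theta_1\Until_{\df^{+k}}\theta_2$ to be reachable, so the reachable part of $xcl(\phi)$ is finite and $\A_{\phi,v}$ is a genuine finite automaton. Dually, the $\Alw_\df$/lower-threshold assertions are handled by the dualized transitions, and the weak acceptance condition is obtained by partitioning states according to subformula rank, placing the ``eventually'' (least-fixpoint) states in rejecting sets and the ``always'' (greatest-fixpoint) states in accepting sets, exactly as in the Boolean weak-automaton translation; strict decrease of $\df$ guarantees the partial order is respected since each discounting step strictly shrinks the residual function.

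With the automaton in hand, I would prove the two directions by induction on the structure of formulas in $xcl(\phi)$, establishing the invariant that from state $\psi \gtrdot t$ the automaton accepts $\pi$ iff (for direction~1, ``only-if'' of the invariant) $\sem{\pi,\psi} > t$, and (for direction~2, restricted to lasso computations) acceptance implies $\sem{\pi,\psi} > t$. Direction~1 is the soundness claim: given $\sem{\pi,\phi} > v$, I build an accepting run by, at each discounting state, choosing the disjunct dictated by where the supremum in the semantics is (approximately) attained, and invoking Lemma~\ref{lem:LTL for positive}(1) to discharge the Boolean $\pos{\cdot}$ assertions whenever a value is strictly positive. Direction~2 is the completeness claim for lassos: from an accepting run I read off witnessing positions and use Lemma~\ref{lem:LTL for positive}(2) (which requires the lasso hypothesis) to convert satisfaction of the Boolean assertions back into strict inequalities on satisfaction values; the finitely-many-suffixes property of lassos lets me replace the $\sup$ by a $\max$ that is actually attained.

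I expect the main obstacle to be the discounting-$\Until$ case, specifically making the threshold bookkeeping precise: one must argue that the comparison $\df(k)\sem{\pi^k,\cdot} > t$ can be decomposed correctly across levels as $\df^{+k}$ shifts, that the cutoff point $\min\set{k : \df(k) \le t}$ is reached after a bounded prefix (here strict monotonicity and $\lim_{i\to\infty}\df(i)=0$ are both essential), and that the switch to the Boolean $\pos{\cdot}$/$\notone{\cdot}$ surrogate beyond the cutoff is \emph{exactly} equivalent to the tail of the discounted semantics. Handling negation uniformly is a secondary subtlety: since $\neg$ interchanges $>t$ with $<1-t$ and exchanges least/greatest fixpoints, I must carry both ``$>$'' and ``$<$'' flavored states and check the weak-partition condition for both, so that the acceptance type of each set matches the fixpoint polarity after all the dualizations.
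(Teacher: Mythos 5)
Your overall architecture is the same as the paper's: states are threshold-annotated assertions $\psi>t$ / $\psi<t$ over the extended closure, the discounting-until unfolds by shifting $\df$ to $\df^{+1}$ while keeping the threshold, finiteness follows from the cutoff where $\df$ drops to the threshold, the zero-threshold case is delegated to the Boolean surrogate $\pos{\cdot}$ of Lemma~\ref{lem:LTL for positive}, and correctness is an induction over $xcl(\phi)$ that uses the lasso hypothesis exactly where you say it does.

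There is, however, one concrete error in your description of the crucial case, and as written it would break item~2. You claim that once $\df(k)\le t$ ``the eventuality obligation can only be met by $\theta_1$ holding forever with value staying above the threshold,'' captured by a Boolean $\pos{\cdot}$ assertion. That is not what the semantics says: every term of the supremum defining $\sem{\pi,\theta_1\Until_\df\theta_2}$ at a position $i\ge k$ is bounded by $\df(i)\sem{\pi^i,\theta_2}\le\df(k)\le t$, so for $t>0$ the assertion $>t$ is simply \emph{unsatisfiable} beyond the cutoff and the transition must be to $\False$ (dually, $\True$ for the $<t$ states); there is no weak-until escape via $\theta_1$. If you instead accept when $\theta_1$ holds forever with positive value, the automaton accepts computations on which $\theta_2$ has value $0$ everywhere, where the discounted until evaluates to $0\le t$ --- a direct counterexample to item~2. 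The Boolean surrogate $\pos{(\theta_1\Until_\df\theta_2)}$ enters only in the threshold-$0$ case (where no cutoff exists, because $\df(i)>0$ for all $i$), and even there it is $\pos{\theta_1}\Until\pos{\theta_2}$ --- still a genuine eventuality, not a safety condition on $\theta_1$ --- with Corollary~\ref{cor:satisfaction} and the lasso hypothesis doing the work of converting Boolean satisfaction back into a strictly positive value. Once you separate these two regimes ($t>0$: hard cutoff to $\True/\False$; $t=0$: Boolean surrogate), the rest of your argument goes through as in the paper.
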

\vspace*{-3mm}
\begin{proof}
We construct $\A_{\phi,v}=\zug{Q,2^{AP},Q_0,\delta,\alpha}$ as follows.

The state space $Q$ consists of two types of states. Type-1 states are assertions of the form $(\psi>t)$ or $(\psi<t)$, where $\psi\in xcl(\phi)$ is of Class 1 or 3 and $t\in [0,1]$. Type-2 states correspond to $\LTL$ formulas of Class 2. Let $S$ be the set of Type-1 and Type-2 states for all $\psi\in xcl(\phi)$ and thresholds $t\in[0,1]$. Then, $Q$ is the subset of $S$ constructed on-the-fly according to the transition function defined below. We later show that $Q$ is indeed finite.

The transition function $\delta:Q\times 2^{AP}\to \B^+(Q)$
is defined as follows. For Type-2 states, the transitions are as in the standard translation from $\LTL$ to $\AWW$ \cite{Var96} (see Appendix~\ref{apx: standard LTL to AWW} for details). For the other states, we define the transitions as follows. Let $\sigma\in 2^{AP}$.
\begin{itemize}
\item $\delta((\True>t), \sigma)=\left[\begin{array}{lc}
\True & \text{ if }t<1,\\
\False & \text{ if }t=1.
\end{array}\right.$

\item $\delta((\False>t), \sigma)=\False$.

\item
$\delta((\True<t), \sigma)=\False.$

\item $\delta((\False<t), \sigma)=\left[\begin{array}{ll}
\True & \text{ if }t>0,\\
\False &\text{ if } t=0.
\end{array}\right.$

\item $\delta((p>t),\sigma)=\left[\begin{array}{ll}
\True & \text{ if }p\in \sigma \text{ and } t<1,\\
\False & \text{ otherwise.}
\end{array}\right.$

\item
$\delta((p<t),\sigma)=\left[\begin{array}{ll}
\False & \text{ if }p\in \sigma \text { or } t=0,\\
\True & \text{ otherwise.}
\end{array}\right.$
\item
$\delta((\psi_1\vee \psi_2 >t),\sigma)=\delta((\psi_1>t),\sigma)\vee
\delta( (\psi_2>t),\sigma)$.
\item
$\delta((\psi_1\vee \psi_2 <t),\sigma)=\delta((\psi_1<t),\sigma)\wedge
\delta( (\psi_2<t),\sigma)$.

\item $\delta((\neg \psi_1 >t),\sigma)=\delta((\psi_1<1-t),\sigma)$
\item $\delta((\neg \psi_1 <t),\sigma)=\delta((\psi_1>1-t),\sigma)$.

\item $\delta((\Next \psi_1>t),\sigma)=(\psi_1>t)$.
 \item $\delta((\Next \psi_1<t),\sigma)=(\psi_1<t)$.

\item
$\delta((\psi_1\Until \psi_2>t),\sigma)=\left[ \begin{array}{ll}
\delta((\psi_2> t),\sigma)\vee [\delta((\psi_1>t),\sigma)\wedge (\psi_1\Until \psi_2> t)] & \mbox{ if $0<t< 1$},\\
\False &  \mbox{ if $t \geq 1$},\\
\delta((\pos{(\psi_1\Until \psi_2)}),\sigma) & \mbox{ if $t=0$}.
\end{array}
\right.
$

\item
$\delta((\psi_1\Until \psi_2<t),\sigma)=\left[ \begin{array}{ll}
\delta((\psi_2< t),\sigma)\wedge [\delta((\psi_1<t),\sigma)\vee (\psi_1\Until \psi_2< t)] & \mbox{ if $0<t\le 1$},\\
\True &  \mbox{ if $t > 1$},\\
\False & \mbox{ if $t=0$}.
\end{array}
\right.
$

\item
$\delta((\psi_1\Until_\df \psi_2>t),\sigma)=$
$\left[ \begin{array}{ll}
\delta((\psi_2> \frac{t}{\df(0)}),\sigma)\vee [\delta((\psi_1>\frac{t}{\df(0)}),\sigma)\wedge (\psi_1\Until_{\df^{+1}}\psi_2> t)] & \mbox{ if $0<\frac{t}{\df(0)}< 1$},\\
\False &  \mbox{ if $\frac{t}{\df(0)} \geq 1$},\\
\delta((\pos{(\psi_1\Until_\df \psi_2)}),\sigma) & \mbox{ if $\frac{t}{\df(0)}=0$ (i.e.,  $t=0$)}.
\end{array}
\right.
$

\item
$\delta((\psi_1\Until_\df \psi_2<t),\sigma)=$
$\left[ \begin{array}{ll}
\delta((\psi_2<\frac{t}{\df(0)}),\sigma)\wedge [\delta((\psi_1<\frac{t}{\df(0)}),\sigma)\vee (\psi_1\Until_{\df^{+1}}\psi_2<t)] & \mbox{ if $0<\frac{t}{\df(0)}\le 1$},\\
\True &  \mbox{ if $\frac{t}{\df(0)}> 1$},\\
\False & \mbox{ if $\frac{t}{\df(0)}=0$ (i.e., $t=0$)}.
\end{array}
\right.
$
\end{itemize}
We provide some intuition for the more complex parts of the transition function: consider, for example, the transition $\delta((\psi_1\Until_\df \psi_2 > t),\sigma)$. Since $\df$ is decreasing, the highest possible satisfaction value for $\psi_1\Until_\df \psi_2$ is $\df(0)$. Thus, if $\df(0)\le t$ (equivalently, $\frac{t}{\df(0)}\ge 1$), then it cannot hold that $\psi_1\Until_\df \psi_2 >t$, so the transition is to $\False$. If $t=0$, then we only need to ensure that the satisfaction value of $\psi_1\Until_\df \psi_2$ is not $0$. 
To do so, we require that $\pos{(\psi_1\Until_\df \psi_2)}$ is satisfied. By Corollary~\ref{cor:satisfaction}, this is equivalent to the satisfiability of the former. 
So the transition is identical to that of the state $\pos{(\psi_1\Until_\df \psi_2)}$. Finally, if $0<t<\df(0)$, then (slightly abusing notation) the assertion $\psi_1\Until_\df \psi_2 >t$ is true if either $\df(0) \psi_2> t$ is true, or both $\df(0)\psi_1 >t$ and $\psi_1\Until_{\df+1} \psi_2 >t$ are true.

The initial state of $\A_{\phi,v}$ is $(\phi>v)$.
The accepting states are these of the form $(\psi_1\Until\psi_2<t)$, as well as accepting states that arise in the standard translation of Boolean \LTL to \AWW (in Type-2 states).
Note that each path in the run of $\A_{\phi,v}$ eventually gets trapped in a single state. Thus, $\A_{\phi,v}$ is indeed an $\AWW$. The intuition behind the acceptance condition is as follows. Getting trapped in state of the form $(\psi_1\Until\psi_2<t)$ is allowed, as the eventuality is satisfied with value $0$. On the other hand, getting stuck in other states (or Type-1) is not allowed, as they involve eventualities that are not satisfied in the threshold promised for them.

This concludes the definition of $\A_{\phi,v}$.  Finally, observe that while the construction as described above is infinite (indeed, uncountable), only finitely many states are reachable from the initial state $(\phi>v)$, and we can compute these states in advance.
Intuitively, it follows from the fact that once the proportion between $t$ and $\eta(i)$ goes above $1$, for Type-1 states associated with threshold $t$ and sub formulas with a discounting function $\eta$, we do not have to generate new states.

A detailed proof of $\A$'s finiteness and correctness is given in the appendix.\vspace{-2mm}
\end{proof}

Since $\A_{\phi,v}$ is 
a Boolean automaton, 
then its language is not empty iff it accepts a lasso computation. Combining this observation with Theorem~\ref{thm:DLTL to AWW}, we conclude with the following.
\begin{corollary}
For an $\LTLD$ formula $\phi$ and a threshold $v\in [0,1]$, it holds that $L(\A_{\phi,v})\neq \emptyset$ iff there exists a computation $\pi$ such that $\sem{\pi,\phi}>v$.
\end{corollary}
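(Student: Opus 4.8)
The plan is to obtain this corollary as an immediate consequence of Theorem~\ref{thm:DLTL to AWW}, using as the connecting ingredient the classical fact that a nonempty $\omega$-regular language always contains a lasso word. I would first note that $\A_{\phi,v}$, being an $\AWW$ over the finite alphabet $2^{AP}$ with finitely many reachable states (as established within the proof of Theorem~\ref{thm:DLTL to AWW}), is an ordinary Boolean automaton. Hence $L(\A_{\phi,v})$ is $\omega$-regular, and so $L(\A_{\phi,v})\neq\emptyset$ if and only if $\A_{\phi,v}$ accepts some lasso computation.

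For the right-to-left direction I would argue directly: assuming a computation $\pi$ with $\sem{\pi,\phi}>v$ exists, Part~1 of Theorem~\ref{thm:DLTL to AWW} gives that $\A_{\phi,v}$ accepts $\pi$, whence $L(\A_{\phi,v})\neq\emptyset$. This half uses only Part~1 and imposes no restriction on the shape of $\pi$.

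For the converse I would start from $L(\A_{\phi,v})\neq\emptyset$, invoke the lasso-witness observation above to obtain a lasso computation $\pi'$ accepted by $\A_{\phi,v}$, and then apply Part~2 of Theorem~\ref{thm:DLTL to AWW} --- which is available precisely because $\pi'$ is a lasso --- to conclude $\sem{\pi',\phi}>v$, exhibiting the desired computation.

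I expect the only subtle point to be this converse direction, and it is exactly the reason Part~2 of Theorem~\ref{thm:DLTL to AWW} was stated only for lassos: an arbitrary accepted word carries no guarantee that its satisfaction value exceeds $v$, so nonemptiness alone does not directly hand us a witnessing computation. The lasso-witness property of $\omega$-regular languages is what closes this gap, letting me extract a lasso from nonemptiness and then rely on the lasso-only guarantee of Part~2. Everything else is a direct substitution into the two halves of the theorem.
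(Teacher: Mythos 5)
Your proposal is correct and matches the paper's own argument: the paper likewise observes that $\A_{\phi,v}$ is a Boolean automaton whose language is nonempty iff it accepts a lasso computation, and then combines this with the two parts of Theorem~\ref{thm:DLTL to AWW} exactly as you do. Your identification of the lasso-witness step as the reason Part~2 suffices despite being stated only for lassos is precisely the intended subtlety.
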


\subsection{Exponential Discounting}
\newcommand{\dfl}[1]{\exp_{#1}} 
\newcommand{\fac}{F}
\label{subsec:ExpDiscounting}
The size of the $\AWW$ generated as per Theorem~\ref{thm:DLTL to AWW} depends on the discounting functions.
In this section, we analyze its size for the class of {\em exponential discounting} functions, showing that it is singly exponential in the specification formula and in the threshold. This class is perhaps the most common class of discounting functions, as it describes what happens in many natural processes (e.g., temperature change, capacitor charge, effective interest rate, etc.) \cite{AHM03,Sha53}.

For $\lambda\in (0,1)$ we define the {\em exponential-discounting} function $\dfl{\lambda}:\Nat\to [0,1]$ by $\dfl{\lambda}(i)=\lambda^i$.
For the purpose of this section, we restrict to $\lambda\in (0,1)\cap \Rat$.
Let $E=\set{\dfl{\lambda}: \lambda\in (0,1)\cap \Rat}$, and consider the logic $\DLTLE$.

For an $\DLTLE$ formula $\phi$ we define the set $\fac(\phi)$ to be $\{\lambda_1,...,\lambda_k \ST$  the operator $\Until_{\dfl{\lambda}} \mbox{ appears in } \phi \}$. Let $|\zug{\phi}|$ be the length of the description of $\phi$. That is, in addition to $|\phi|$, we include in $|\zug{\phi}|$ the length, in bits, of describing $\fac(\phi)$.
\begin{theorem}
\label{thm: exp disc to AWW}
Given an $\DLTLE$ formula $\phi$ and a threshold $v\in [0,1]\cap \Rat$, there exists an $\AWW$ $\A_{\phi,v}$ such that for every computation $\pi$ the following hold.
\begin{enumerate}
\item If $\sem{\pi,\phi}>v$, then $\A_{\phi,v}$ accepts $\pi$.
\item If $\A_{\phi,v}$ accepts $\pi$ and $\pi$ is a lasso computation, then $\sem{\pi,\phi}> v$.
\end{enumerate} 
Furthermore, the number of states of $\A_{\phi,v}$ is singly exponential in $|\zug{\phi}|$ and in the description of $v$.
\end{theorem}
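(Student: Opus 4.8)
The plan is to reuse the construction of Theorem~\ref{thm:DLTL to AWW} verbatim and only sharpen its size analysis for the exponential case. Since every $\dfl{\lambda}$ is a discounting function, the automaton $\A_{\phi,v}$ built there already satisfies items~1 and~2, so nothing new is needed for correctness; the entire content of the theorem is the state bound. I would therefore concentrate on counting the states reachable from the initial state $(\phi>v)$. These are of two kinds: Type-1 assertions $(\psi>t)$ and $(\psi<t)$ with $\psi\in\xcl(\phi)$ of Class~1 or~3, and the Type-2 states coming from the standard \LTL-to-\AWW translation of the formulas $\pos{\theta}$. The Type-2 states are plain \LTL subformulas carrying no threshold, so there are only polynomially many and they can be set aside; the work is in counting the Type-1 states.

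The crucial feature of exponential discounting is the scaling identity $\dfl{\lambda}^{+k}(0)=\lambda^k$, and more generally $\sem{\pi,\psi_1\Until_{\dfl{\lambda}^{+k}}\psi_2}=\lambda^k\sem{\pi,\psi_1\Until_{\dfl{\lambda}}\psi_2}$. Reading off the transition function, a Class-3 chain $(\psi_1\Until_{\dfl{\lambda}^{+k}}\psi_2>t)$, $k=0,1,2,\ldots$, keeps the until-threshold $t$ fixed while the shift $k$ grows, spawning the sub-assertions at threshold $t/\lambda^k$. Since the transition collapses to $\False$ as soon as $t/\lambda^k\ge 1$, such a chain has length at most $k_{\max}(\lambda,t)=\lceil\log_\lambda t\rceil=O\!\left(\frac{\ln(1/t)}{\ln(1/\lambda)}\right)$. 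Writing $\lambda=a/b$ in lowest terms gives $\ln(1/\lambda)=\ln(b/a)\ge\frac{1}{a+1}\ge 2^{-(m+1)}$, where $m$ bounds the bit-length of $\fac(\phi)$ (so $m\le|\zug{\phi}|$), whence $k_{\max}(\lambda,t)=O\!\left(2^{m}\ln(1/t)\right)$. Thus each chain is finite, and to bound its length it remains to control $\ln(1/t)$ for the entry thresholds $t$ that actually arise.

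Next I would bound the reachable thresholds. Starting from $v$, the transition function generates new thresholds only by negation, $t\mapsto 1-t$, and by discounting, $t\mapsto t/\lambda^k$ for $\lambda\in\fac(\phi)$ and $0\le k\le k_{\max}$; a branch is terminated by $\True/\False$ once its threshold leaves $(0,1]$, and the number of such operations along any branch is bounded by the operator-nesting depth $d\le|\phi|$. The obstacle here is that nested discounting operators \emph{compound} the thresholds: the entry threshold of an until at nesting level $j$ is $v$ divided by a product of prefix factors $\lambda_i^{k_i}$ interleaved with complementations, whose denominator $D_j$ can grow doubly exponentially. I would tame this with a level-by-level recursion: writing $\kappa_j$ for the maximal shift used at level $j$ and letting $|v|$ denote the bit-length of $v$, the bound $t\ge 1/D_j$ together with $k_{\max}=O(2^m\ln(1/t))$ yields, up to constants, $\kappa_j\le 2^{m+1}\!\left(|v|+m\sum_{i<j}\kappa_i\right)$, and solving this linear recurrence over $j\le|\phi|$ gives $\kappa_j=2^{\mathrm{poly}(|\zug{\phi}|,\,|v|)}$. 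The point is that although $D_j$ is only doubly-exponentially bounded, the chain length $k_{\max}\approx\ln D_j$ is singly exponential, which is exactly what we need.

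Finally I would assemble the count. Every reachable Type-1 state is determined by a subformula occurrence of $\phi$ (there are $O(|\phi|)$ of them), the tuple of shift values $k_i\in[0,\kappa_{\max}]$ of its at most $|\phi|$ discounting-until ancestors, and one of the tags $>$ or $<$; hence the number of Type-1 states is at most $2\,|\phi|\,(\kappa_{\max}+1)^{|\phi|}$. Since $\kappa_{\max}=2^{\mathrm{poly}(|\zug{\phi}|,\,|v|)}$, this is $\left(2^{\mathrm{poly}}\right)^{|\phi|}=2^{\mathrm{poly}(|\zug{\phi}|,\,|v|)}$, i.e.\ singly exponential, and it dominates the polynomially many Type-2 states. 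I expect the main difficulty to be precisely the compounding step of the previous paragraph: one must count states through shift-tuples rather than through threshold denominators (the latter yields only a doubly-exponential bound), and one must verify that interleaving the non-multiplicative operation $t\mapsto 1-t$ with the divisions does not break the recurrence for $\kappa_j$.
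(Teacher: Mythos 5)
Your proposal is correct and follows essentially the same route as the paper's proof: both exploit the identity $\eta^{+k}=\lambda^{k}\cdot\eta$ for $\eta(i)=\lambda^{i}$ to bound each discounting chain by $\log_{\lambda}t$, and both control how small the thresholds can become across negations by lower-bounding $1-s$ via the denominators of the rationals involved, iterating this over the $O(|\phi|)$ nesting levels. The only difference is bookkeeping: the paper folds the offsets into rescaled thresholds (replacing the state $\psi_1\Until_{\eta^{+i}}\psi_2<t$ by $\psi_1\Until_{\eta}\psi_2<t/\lambda^{i}$) and counts the reachable thresholds, whereas you keep the offsets and count shift-tuples --- the two counts coincide and yield the same singly-exponential bound.
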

The proof follows from the following observation. Let $\lambda\in (0,1)$ and $v\in (0,1)$. When discounting by $\dfl{\lambda}$, the number of states in the $\AWW$ constructed as per Theorem~\ref{thm:DLTL to AWW} is proportional to the maximal number $i$ such that $\lambda^i>v$, which is at most $\log_\lambda v=\frac{\log v}{\log \lambda}$, which is polynomial in the description length of $v$ and $\lambda$. A similar (yet more complicated) consideration is applied for the setting of multiple discounting functions and negations.

\subsection{From $\A_{\phi,v}$ to an $\NBW$}
Every $\AWW$ can be translated to an equivalent nondeterministic B\"uchi automaton ($\NBW$, for short), yet the state blowup might be exponential 
{BKR10,MH84}. 
By carefully analyzing the $\AWW$ $\A_{\phi,v}$ generated in Theorem~\ref{thm:DLTL to AWW}, we show that it can be translated to an \NBW with only a polynomial blowup.

The idea behind our complexity analysis is as follows.
Translating an $\AWW$ to an $\NBW$ involves alternation removal, which proceeds by keeping track of 
entire 
levels in a run-{\sc dag}. Thus, a run of the $\NBW$ corresponds to a sequence of subsets of $Q$. The key to the reduced state space is that the number of such subsets is only $|Q|^{O(|\phi|)}$ and not $2^{|Q|}$. To see why, consider a subset $S$ of the states of $\A$. We say that $S$ is {\em minimal} if it does not include two states of the form $\phi<t_1$ and $\phi<t_2$, for $t_1<t_2$, nor two states of the form $\phi\Until_{\df^{+i}} \psi <t$ and $\phi\Until_{\df^{+j}}\psi <t$, for $i<j$, and similarly for ``$>$''. Intuitively, sets that are not minimal hold redundant assertions, and can be ignored. Accordingly, we restrict the state space of the $\NBW$ to have only minimal sets.

\vspace*{-3pt}
\begin{lemma}
\label{lem:DLTL to NBW}
For an $\LTLD$ ~formula $\phi$ and $v \in [0,1]$, the $\AWW$ $\A_{\phi,v}$ constructed in Theorem~\ref{thm:DLTL to AWW} with state space $Q$ can be translated to an $\NBW$ with $|Q|^{O(|\phi|)}$ states.
\end{lemma}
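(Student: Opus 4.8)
The plan is to perform the standard Miyano--Hayashi alternation-removal construction on $\A_{\phi,v}$, but to restrict the reachable state space of the resulting $\NBW$ to \emph{minimal} subsets only, and then to bound the number of such minimal subsets by $|Q|^{O(|\phi|)}$. Recall that alternation removal tracks entire levels of a run-{\sc dag}: a state of the $\NBW$ is (essentially) a pair $(S,O)$ where $S\subseteq Q$ is the current level and $O\subseteq S$ records the states along which the acceptance obligation has not yet been met, with the B\"uchi condition firing whenever $O$ empties and resets to the rejecting states of $S$. First I would recall this construction (citing the standard reference) and observe that both components $S$ and $O$ are subsets of $Q$, so the crude bound is $2^{|Q|}\cdot 2^{|Q|}$; the entire content of the lemma is improving the $2^{|Q|}$ factor to $|Q|^{O(|\phi|)}$.

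The key step is the closure argument: I would argue that if we only ever take \emph{minimal} subsets as $S$ (and correspondingly trim $O$), then the construction is \emph{sound}, i.e. the language is preserved. Concretely, I need the observation already flagged in the text before the lemma: a set containing both $\psi<t_1$ and $\psi<t_2$ with $t_1<t_2$ is redundant because $\sem{\pi,\psi}<t_1$ implies $\sem{\pi,\psi}<t_2$, so the stronger assertion $\psi<t_1$ subsumes the weaker one; symmetrically for $\psi>t$, and likewise for the discounted-until states $\psi_1\Until_{\df^{+i}}\psi_2$ indexed by the shift $i$ (where larger shift gives a weaker assertion, since $\df^{+i}$ discounts more). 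I would then verify that minimality is preserved by the transition: after applying $\delta$ to a minimal set and taking the successor level, any non-minimal pair of assertions can be pruned to its minimal representative without changing which lassos are accepted, because the weaker assertion in each redundant pair is implied by the stronger one and hence imposes no additional constraint on any accepting run. This lets us close the reachable part of the $\NBW$ under minimal sets.

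Finally I would count minimal sets. The states of $Q$ fall into finitely many \emph{templates}: for each of the $O(|\phi|)$ subformulas $\psi$ appearing in $xcl(\phi)$, the associated Type-1 states are of the form $(\psi>t)$ or $(\psi<t)$, and the discounted ones additionally of the form $(\psi_1\Until_{\df^{+k}}\psi_2 \bowtie t)$; minimality forces a minimal set to contain, for each template, at most one representative in each direction $>$ and $<$. Hence a minimal set is determined by choosing, for each of the $O(|\phi|)$ templates, at most one of the (at most $|Q|$) concrete states realizing that template, giving at most $(|Q|+1)^{O(|\phi|)} = |Q|^{O(|\phi|)}$ possibilities for $S$; since $O\subseteq S$ the obligation component can be absorbed into the same asymptotic bound (either $O$ is likewise a minimal subset, contributing another $|Q|^{O(|\phi|)}$ factor, or one argues it ranges over subsets of the already-bounded $S$). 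Multiplying the per-template bounds yields $|Q|^{O(|\phi|)}$ states in total, as claimed.

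The main obstacle I expect is the soundness of the minimality restriction rather than the counting, which is essentially bookkeeping. One has to be careful that pruning a redundant assertion does not accidentally remove an obligation that was needed to enforce the B\"uchi (weakness) condition correctly: the accepting states are the $(\psi_1\Until\psi_2<t)$ states together with the Type-2 accepting states, so I must check that when a redundant $<$-assertion is discarded its \emph{stronger} surviving counterpart carries the same (or a stronger) acceptance obligation, so that trapping behavior in the weak automaton is faithfully reflected in the trimmed $\NBW$. Verifying this compatibility between the subsumption order on assertions and the weak acceptance partition $Q_1,\dots,Q_k$ of $\A_{\phi,v}$ is the delicate part; once it is in place, the state bound follows immediately from the template counting above.
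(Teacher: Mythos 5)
Your proposal is correct and follows essentially the same route as the paper: the paper also performs a level-tracking alternation removal (via the subset construction to a generalized B\"uchi automaton, then to an \NBW), restricts to minimal sets using exactly your subsumption observations on thresholds and on the offsets of $\Until_{\df^{+i}}$, and counts reachable sets by noting that each one assigns at most one representative per direction to each of the $O(|\phi|)$ subformula templates. The only cosmetic difference is your use of the Miyano--Hayashi $(S,O)$ formulation in place of the paper's NGBW intermediate step, and your explicit flagging of the compatibility between subsumption and the weak acceptance partition, which the paper leaves implicit.
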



\subsection{Decision Procedures for $\DLTL$}
\Paragraph{Model checking and satisfiability} Consider a Kripke structure $\K$, an $\LTLD$ formula $\phi$, and a threshold $v$. By checking the emptiness of the intersection of $\K$ with
$\A_{\neg\phi,1-v}$, we can solve the threshold model-checking problem. Indeed, $L(\A_{\neg \phi,1-v})\cap L(\K)\neq \emptyset$ iff there exists a lasso computation $\pi$ that is induced by $\K$ such that $\sem{\pi,\phi}<v$, which happens iff it is not true that $\sem{\K,\phi}\ge v$.


The complexity of the model-checking procedure depends on the discounting functions in $\D$. For the set of exponential-discounting functions $E$, we provide the following concrete complexities, showing that it stays in the same complexity classes of standard \LTL model-checking.

\begin{theorem}\label{thm:ExpModelCheck}
For a Kripke structure $\K$, an $\DLTLE$ formula $\phi$, and a threshold $v\in [0,1]\cap \Rat$, the problem of deciding whether $\sem{\K,\phi}>v$ is in NLOGSPACE in the number of states of $\K$, and in PSPACE in $|\zug{\phi}|$ and in the description of $v$.
\end{theorem}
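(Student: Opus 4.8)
The plan is to assemble the pieces developed above into an on-the-fly B\"uchi emptiness test, exactly along the lines of the reduction sketched in the ``Model checking and satisfiability'' paragraph. To decide the threshold comparison between $\sem{\K,\phi}$ and $v$, I would construct the $\AWW$ $\A_{\neg\phi,1-v}$ of Theorem~\ref{thm:DLTL to AWW}, translate it to an equivalent $\NBW$ $\B$ as in Lemma~\ref{lem:DLTL to NBW}, form the product $\B\times\K$, and test this product for emptiness. Correctness is inherited from the preceding discussion: $L(\A_{\neg\phi,1-v})\cap L(\K)\neq\emptyset$ holds iff some lasso computation of $\K$ has $\sem{\pi,\phi}<v$, and since a nonempty B\"uchi language always has a lasso witness, this is equivalent to $\sem{\K,\phi}<v$, i.e.\ to $\sem{\K,\phi}$ failing the threshold $v$. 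Since both NLOGSPACE and PSPACE are closed under complementation, the exact strict/non-strict form of the comparison (the statement asks for $\sem{\K,\phi}>v$) is immaterial to the class-level bounds being claimed, so it suffices to place this single emptiness test in the stated classes.

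For the size accounting I would invoke Theorem~\ref{thm: exp disc to AWW}: since $\neg\phi$ has $O(|\phi|)$ subformulas, uses the same set of discounting bases as $\phi$, and $1-v$ has the same description length as $v$, the $\AWW$ $\A_{\neg\phi,1-v}$ has a number of states that is singly exponential in $|\zug{\phi}|$ and in the description of $v$; write this as $N=2^{\mathrm{poly}(|\zug{\phi}|,|v|)}$. By Lemma~\ref{lem:DLTL to NBW} the equivalent $\NBW$ $\B$ then has $N^{O(|\phi|)}$ states, which is again $2^{\mathrm{poly}(|\zug{\phi}|,|v|)}$. Hence the product $\B\times\K$ has $|\K|\cdot 2^{\mathrm{poly}(|\zug{\phi}|,|v|)}$ states, and nonemptiness of a B\"uchi automaton --- searching for a reachable state lying on a cycle through an accepting state --- is solvable by nondeterministic graph reachability, i.e.\ in space logarithmic in the number of product states.

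The crux, and the step I expect to be the main obstacle, is to carry out this emptiness test \emph{on the fly}, so that the exponential state space is never materialized and the two resource measures separate cleanly. I would represent each product state as a pair consisting of a state of $\K$ (described in $O(\log|\K|)$ bits) together with a state of $\B$, the latter being a minimal subset of states of $\A_{\neg\phi,1-v}$, each of which is an assertion $(\psi>t)$ or $(\psi<t)$ with $\psi\in xcl(\neg\phi)$ and $t$ a rational threshold. The key point, supplied by the analysis underlying Theorem~\ref{thm: exp disc to AWW}, is that every threshold arising along a run is a rational whose numerator and denominator have $\mathrm{poly}(|\zug{\phi}|,|v|)$ bits, so a single product state is describable in $O(\log|\K|)+\mathrm{poly}(|\zug{\phi}|,|v|)$ bits. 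I would then guess a lasso through the product one state at a time, generating successors on demand by evaluating the $\AWW$ transition function of Theorem~\ref{thm:DLTL to AWW} (which only requires dividing a threshold by $\df(0)=\lambda$ and detecting when the ratio crosses $0$ or $1$) and performing one step of alternation removal on the current level.

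The care needed --- and where the argument must be checked rather than asserted --- is to verify that each such local operation stays inside the space budget: successor generation, the minimality restriction on subsets, and the weak/B\"uchi acceptance bookkeeping must all be computable in $O(\log|\K|)+\mathrm{poly}(|\zug{\phi}|,|v|)$ space, which in turn rests on the threshold arithmetic remaining at polynomial precision. Granting this, the nondeterministic reachability procedure runs in space $O(\log|\K|)+\mathrm{poly}(|\zug{\phi}|,|v|)$. Reading this bound as a function of $|\K|$ alone yields NLOGSPACE in the system, and reading it as a function of $(|\zug{\phi}|,|v|)$ alone yields NPSPACE $=$ PSPACE in the specification, which is exactly the claim.
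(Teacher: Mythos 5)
Your proposal is correct and follows essentially the same route as the paper: combine Theorem~\ref{thm: exp disc to AWW} with Lemma~\ref{lem:DLTL to NBW} to obtain a singly-exponential \NBW, take its product with $\K$, and run an on-the-fly nonemptiness check, yielding NLOGSPACE in $|\K|$ and PSPACE in $|\zug{\phi}|$ and the description of $v$. The paper's own proof is a two-line appeal to ``standard on-the-fly procedures''; your additional bookkeeping about polynomial-precision thresholds, state encodings, and the strict-versus-nonstrict comparison only makes explicit what the paper leaves implicit.
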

\begin{proof}
By Theorem~\ref{thm: exp disc to AWW} and Lemma~\ref{lem:DLTL to NBW},  the size of an $\NBW$ $\B$ corresponding to $\phi$ and $v$ is singly exponential in $|\zug{\phi}|$ and in the description of $v$. Hence, we can check the emptiness of the intersection of $\K$ and $\B$ via standard ``on the fly'' procedures, getting the stated complexities.
\end{proof}

Note that the complexity in Theorem~\ref{thm:ExpModelCheck} is 
only
NLOGSPACE in the system, since our solution does not analyze the Kripke structure, but only takes its product with the specification's automaton. This is in contrast to the approach of model checking temporal logic with (non-discounting) accumulative values, where, when decidable, involves a doubly-exponential dependency on the size of the system~\cite{BCHK11}.

Finally, observe that the $\NBW$ obtained in Lemma~\ref{lem:DLTL to NBW} can be used to solve the threshold-satisfiability problem: given an $\DLTL$ formula $\phi$ and a threshold $v\in [0,1]$, we can decide whether there is a computation $\pi$ such that $\sem{\pi,\phi}\sim v$, for $\sim \in \{ <, >\}$, and return such a computation when the answer is positive. This is done by simply deciding whether there exists a word that is accepted by the $\NBW$.

\paragraph{Threshold synthesis}
\label{rmk:SAT and synthesis}
Consider an $\LTLD$ formula $\phi$, and assume a partition of the atomic propositions in $\phi$ to input and output signals, we can use the $\NBW$ $\A_{\phi,v}$ in order to address the {\em synthesis} problem, as stated in the following theorem (see Appendix~\ref{apx: synthesis} for the proof).
\begin{theorem}
\label{thm:synthesis}
Consider an $\LTLD$ formula $\phi$. If there exists a transducer $\T$ all of whose computations $\pi$ satisfy $\sem{\pi,\phi}> v$, then we can generate a transducer $\T$ all of whose computations $\tau$ satisfy $\sem{\tau,\phi}\ge v$.
\end{theorem}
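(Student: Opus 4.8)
The plan is to reduce threshold synthesis to Boolean synthesis over the automaton $\A_{\phi,v}$ of Theorem~\ref{thm:DLTL to AWW}, and then to bridge the gap between the strict inequality $>v$ that this automaton certifies and the non-strict inequality $\ge v$ demanded by the conclusion by means of a lasso-witness argument against the dual automaton $\A_{\neg\phi,1-v}$.

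First I would construct, via Theorem~\ref{thm:DLTL to AWW} and Lemma~\ref{lem:DLTL to NBW}, an $\NBW$ $\B$ equivalent to $\A_{\phi,v}$, so that every computation $\pi$ with $\sem{\pi,\phi}>v$ is accepted, and every accepted \emph{lasso} $\pi$ satisfies $\sem{\pi,\phi}>v$. Using the given partition of $AP$ into input and output signals, I would then apply standard automata-theoretic synthesis to $\B$: determinize it to a deterministic parity automaton, solve the parity game in which the environment supplies inputs and the system supplies outputs, and, when the system wins, extract a finite-memory winning strategy, that is, a finite transducer $\T$ all of whose computations lie in $L(\B)$. The hypothesis enters exactly here: by assumption some transducer has all its computations $\pi$ satisfying $\sem{\pi,\phi}>v$, and by item~1 of Theorem~\ref{thm:DLTL to AWW} all of these computations belong to $L(\B)$; hence $\B$ is realizable and the synthesis procedure succeeds in producing $\T$.

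It remains to upgrade the guarantee from lassos to \emph{all} computations. Here I would also build $\A_{\neg\phi,1-v}$, which, applying Theorem~\ref{thm:DLTL to AWW} to $\neg\phi$ and $1-v$ together with $\sem{\pi,\neg\phi}=1-\sem{\pi,\phi}$, accepts every $\pi$ with $\sem{\pi,\phi}<v$ and whose accepted lassos all satisfy $\sem{\pi,\phi}<v$. Suppose for contradiction that some computation $\tau$ of $\T$ had $\sem{\tau,\phi}<v$. Then $\tau\in L(\A_{\neg\phi,1-v})$, so $L(\T)\cap L(\A_{\neg\phi,1-v})$ is nonempty; since the set of computations of a finite transducer is $\omega$-regular, as is $L(\A_{\neg\phi,1-v})$, this intersection is $\omega$-regular and therefore contains a lasso witness $\tau'$. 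Being an accepted lasso of $\A_{\neg\phi,1-v}$ forces $\sem{\tau',\phi}<v$; but $\tau'$ is a computation of $\T$, hence $\tau'\in L(\B)$, and being a lasso this gives $\sem{\tau',\phi}>v$, a contradiction. Thus no computation of $\T$ lies below $v$, i.e. every computation $\tau$ of $\T$ satisfies $\sem{\tau,\phi}\ge v$, as required.

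The hard part is precisely this last step: both the synthesis game and the automaton $\A_{\phi,v}$ only certify the threshold on lasso (ultimately periodic) computations, whereas a transducer driven by an arbitrary, possibly aperiodic, input stream produces computations that need not be lassos. The lasso-witness argument is what transfers a lasso-only guarantee to every computation, at the unavoidable cost of weakening $>v$ to $\ge v$; indeed the value $v$ may be approached only in the limit along aperiodic inputs, which is the intuitive reason the conclusion cannot be restored to the strict form $>v$.
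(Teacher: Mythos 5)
Your proof is correct, and while its skeleton (reduce to Boolean synthesis over the $\NBW$ obtained from $\A_{\phi,v}$, use item~1 of Theorem~\ref{thm:DLTL to AWW} to show the game is winnable from the hypothesized transducer) matches the paper, your handling of the crucial final step is genuinely different. The paper bridges the gap between ``accepted by $\A_{\phi,v}$'' and ``value $\ge v$'' for arbitrary, possibly aperiodic, computations by asserting---after ``carefully examining the construction''---that acceptance of \emph{any} computation by $\A_{\phi,v}$ already implies $\sem{\pi,\phi}\ge v$; this is a strengthening of Theorem~\ref{thm:DLTL to AWW} that is stated but not proved in detail, and it requires reopening the automaton construction. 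You instead keep Theorem~\ref{thm:DLTL to AWW} as a black box and run a lasso-witness argument against the dual automaton $\A_{\neg\phi,1-v}$: if some computation of the synthesized transducer had value below $v$, the ($\omega$-regular) intersection of the transducer's language with $L(\A_{\neg\phi,1-v})$ would be nonempty and hence contain a lasso $\tau'$, which would then have to satisfy both $\sem{\tau',\phi}<v$ (as an accepted lasso of the dual automaton) and $\sem{\tau',\phi}>v$ (as a lasso computation of the transducer, hence an accepted lasso of $\A_{\phi,v}$), a contradiction. Your route buys self-containedness---it uses only the two stated items of the main theorem plus standard $\omega$-regularity---at the cost of constructing a second automaton; the paper's route is shorter once one accepts the unproved observation about $\A_{\phi,v}$, and yields the slightly stronger standalone fact that acceptance by $\A_{\phi,v}$ certifies value at least $v$ on every word. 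Both arrive at the same partial guarantee, and your closing remark correctly identifies why the strict inequality cannot in general be recovered.
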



\section{Adding Propositional Quality Operators}
\label{sec:adding quality}

As model checking is decidable for $\DLTL$, one may wish to push the limit and extend the expressive power of the logic. In particular, of great interest is the combining of discounting with propositional quality operators \cite{ABK13}.
%
\subsection{Adding the Average Operator}
\label{sec:average}
A well-motivated extension is the introduction of the average operator $\avg{}$, with the semantics $\sem{\pi,\phi\avg{}\psi}=\frac{\sem{\pi,\phi}+\sem{\pi,\psi}}{2}$.
The work in~\cite{ABK13} proves that extending $\LTL$ by this operator, as well as with other propositional quantitative operators, enables clean specification of quality and results in a logic for which the model-checking problem can be solved in PSPACE.

\newcommand{\inc}{\mbox{\sc inc}\xspace}
\newcommand{\dec}{\mbox{\sc dec}\xspace}
\newcommand{\goto}{\mbox{\sc goto }}
\newcommand{\halt}{\mbox{\sc halt}\xspace}
\newcommand{\jz}[3]{\mbox{\sc if $#1$=0 goto $#2$ else goto $#3$}\xspace}
\newcommand{\CMrun}{\rho}
\newcommand{\comcheck}{\mbox{\rm ComCheck}\xspace}

We show that adding the $\avg{}$ operator to $\DLTL$ gives a logic, denoted $\AvDLTL$, for which the validity problem is undecidable.
%
The validity problem asks, given an $\AvDLTL$ formula $\varphi$ over the atomic propositions $AP$ and a threshold $v\in[0,1]$, whether $\sem{\pi,\varphi}>v$ for every $\pi\in ({2^{AP}})^\omega$.

In the undecidability proof, we show a reduction from the 0-halting problem for two-counter machines.
A {\em two-counter machine} $\M$ is a sequence $(l_1,\ldots,l_n)$ of commands involving two counters $x$ and $y$. We refer to
$\set{1,\ldots,n}$ as the {\em locations} of the machine. There are five possible forms of commands:
$$\inc(c),\ \dec(c),\ \goto l_i,\  \jz{c}{l_i}{l_j},\  \halt,$$
where $c\in \set{x,y}$ is a counter and $1\le i,j\le n$ are locations. Since we can always check whether $c=0$ before a $\dec(c)$ command, we
assume that the machine never reaches $\dec(c)$ with $c=0$. That is, the counters never have negative values. Given a counter machine $\M$,
deciding whether $\M$ halts is known to be undecidable \cite{Min67}. Given $\M$, deciding whether $\M$ halts with both counters having value
$0$, termed the {\em $0$-halting problem}, is also undecidable: given a counter machine $\M$, we can replace every \halt command with a code that clears the counters before
halting. 
In fact, from this we see that the promise problem of deciding whether $\M$ $0$-halts given the promise that either it $0$-halts, or it does not halt at all, is also undecidable.

\begin{theorem}\label{thm:UndecidableAverage}
The validity problem for $\AvDLTL$ is undecidable (for every nonempty set of discounting functions $\D$).
\end{theorem}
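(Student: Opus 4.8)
The plan is to reduce the (promise) $0$-halting problem for two-counter machines, just shown undecidable, to the \emph{complement} of the validity problem, so that an algorithm for validity would decide $0$-halting. I fix at the outset an arbitrary $\df\in\D$ and use only the operators $\Ev_\df$ (and $\Alw$, $\Next$, $\Until$, $\avg{}$), so the whole construction lives inside $\AvDLTL$ with a single discounting function; this yields the strong form of the statement. Given a machine $\M=(l_1,\dots,l_n)$ I build a formula $\varphi$ over a suitable $AP$ and set the threshold $v=\tfrac12$, arranging that $\varphi$ is \emph{valid} (i.e.\ $\sem{\pi,\varphi}>\tfrac12$ for all $\pi$) iff $\M$ does \emph{not} $0$-halt. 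Indeed, a faithful encoding of a $0$-halting run will be given value exactly $\tfrac12$ (a witness to non-validity), while every other word gets value strictly above $\tfrac12$; under the promise, $\M$ either produces a faithful $0$-halting encoding or produces none at all, so validity is equivalent to non-$0$-halting.

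First I encode runs of $\M$ as words in $(2^{AP})^\omega$: a word lists the successive configurations $(\text{location},x,y)$, the current location marked by a dedicated proposition and each counter recorded by the position of a marker. The purely \emph{structural} requirements --- correct block format, initial configuration $(l_1,0,0)$, and eventual arrival at a $\halt$ command with both counters $0$ --- are $\omega$-regular, hence captured by a Boolean \LTL\ formula $B$ whose value is always in $\set{0,1}$ (for instance through the $\pos{(\cdot)}$ construction of Lemma~\ref{lem:LTL for positive}). The quantitative heart is a single equality gadget. Reading a marker with $\Ev_\df$ from a fixed reference position returns $\df$ applied to its distance from that position, and since $\df$ is strictly monotone this reading is \emph{injective}; hence from a common reference point two markers yield equal discounted values iff they are equidistant. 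The combinator $\phi\avg{}\neg\psi$ evaluates to $\tfrac{1+\sem{\cdot,\phi}-\sem{\cdot,\psi}}{2}$, which equals $\tfrac12$ exactly when the two readings agree and moves off $\tfrac12$ otherwise; taking the $\max$ (i.e.\ $\vee$) of the two one-sided versions gives a value $\tfrac12+\tfrac12\lvert a-b\rvert$ that is $\ge\tfrac12$ always and equals $\tfrac12$ iff $a=b$, so that \emph{every} discrepancy is detected on a fixed side of the threshold.

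Using this gadget I encode the local constraints imposed by the five command types --- $x'=x$ for $\goto$/location bookkeeping, $x'=x+1$ for $\inc$, $x'=x-1$ for $\dec$, the test $x=0$ inside $\jz{x}{l_i}{l_j}$, and the analogous facts for $y$ --- as equalities between discounted readings taken from a common reference, the unit increments and decrements being obtained by an additional $\Next$-shift before comparing. I then aggregate these per-configuration error signals over the whole run with $\Ev$ (a supremum on the marked positions), so the aggregate is $\tfrac12$ iff \emph{no} step errs and $>\tfrac12$ iff some step errs, with the post-$\halt$ tail arranged to contribute the neutral value $\tfrac12$. Finally I assemble $\varphi$ from $B$ and this aggregate $C$ by the quantitative connectives, $\varphi=(\neg B)\vee(B\wedge C)$, whose value is $1$ when $B$ fails and $C$ when $B$ holds; thus $\sem{\pi,\varphi}=\tfrac12$ precisely on words that pass all structural checks and commit no update error --- i.e.\ on faithful encodings of $0$-halting runs --- and $\sem{\pi,\varphi}>\tfrac12$ on every other word. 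Non-validity of $\varphi$ is therefore equivalent to $\M$ $0$-halting, completing the reduction.

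The main obstacle is the counter comparison across configurations. Counters are unbounded, yet the logic manipulates only values in $[0,1]$ through discounting \emph{without} accumulation, and for a general $\df$ there is no multiplicative law $\df(i+j)=\df(i)\df(j)$ to exploit, so one cannot simply cancel the discounting incurred between blocks. The resolution I would pursue is to route every constraint through \emph{common-reference position equalities}, exploiting only injectivity of $\df$ rather than its shape, and to introduce auxiliary tracking markers that realign the two configurations being compared to a shared reference point; designing this alignment so that all five command types reduce to such equalities for an arbitrary $\df$, and verifying that each violation pushes the value strictly above $v$ in both the ``too large'' and ``too small'' directions, is the technically delicate part. By contrast, the structural formula $B$ and the final $\min/\max$ assembly are routine.
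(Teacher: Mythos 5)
Your proposal is correct and follows essentially the paper's own proof: the same reduction from the $0$-halting problem for two-counter machines, the same threshold $\tfrac12$, and the same key gadget that compares counter values via $\avg{}$ with a negated reading, exploiting only the injectivity of the strictly decreasing $\df$ --- merely dualized so that violations push the value above $\tfrac12$ and the complement of validity is targeted directly, where the paper first passes through an intermediate ``$\tfrac12$-co-validity'' problem and then negates the formula. The alignment difficulty you flag at the end is resolved in the paper exactly along the lines you propose, and more simply than you fear: the undiscounted $\Until$ walks to the start of each counter block with no discounting penalty, so each counter $\alpha$ is read as $\df(\alpha)$ from its own block's origin and no realignment markers are needed.
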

\begin{proof}
We start by showing a reduction from the $0$-halting problem for two-counter machines to the following  problem: given an $\AvDLTL$ formula $\varphi$ over the atomic propositions $AP$, whether there exists a path $\pi\in AP^{\omega}$ such that $\sem{\pi,\varphi}\ge\frac12$. We dub this the {\em $\frac12$-co-validity} problem. We will later reduce this problem to the (complement of the) validity problem.

Let $\M$ be a two-counter machine
with commands $(l_1,\ldots,l_n)$. A {\em halting run} of a two-counter machine with commands from the set $L=\{l_1,\ldots,l_n\}$ is a sequence
$\CMrun=\CMrun_1,\ldots,\CMrun_m\in (L\times\Nat \times\Nat)^*$ such that the following hold.
\begin{enumerate}
\item $\CMrun_1=\zug{l_1,0,0}$.
\item For all $1< i\le m$, let $\CMrun_{i-1}=(l_k,\alpha,\beta)$ and $\CMrun_{i}=(l',\alpha',\beta')$. Then, the following hold.
\begin{itemize}
\item If $l_k$ is an $\inc(x)$ command (resp. $\inc(y)$), then $\alpha'=\alpha+1$, $\beta'=\beta$ (resp. $\beta'=\beta+1$, $\alpha'=\alpha$), and $l'=l_{k+1}$.
\item If $l_k$ is a $\dec(x)$ command (resp. $\dec(y)$), then $\alpha'=\alpha-1$, $\beta'=\beta$ (resp. $\beta'=\beta-1$, $\alpha'=\alpha$), and $l'=l_{k+1}$.
\item If $l_k$ is a $\goto l_s$ command, then $\alpha'=\alpha$, $\beta'=\beta$, and $l'=l_s$.
\item If $l_k$ is an $\jz{x}{l_s}{l_t}$ command, then $\alpha'=\alpha$, $\beta'=\beta$, and $l'=l_s$ if $\alpha=0$, and $l'=l_t$ otherwise.
\item If $l_k$ is an $\jz{y}{l_s}{l_t}$ command, then $\alpha'=\alpha,\beta'=\beta$, and $l'=l_s$ if $\beta=0$, and $l'=l_t$ otherwise.

\item If $l'$ is a $\halt$ command, then $i=m$. That is, a run does not continue after $\halt$.
\end{itemize}
\item $\rho_m=\zug{l_k,\alpha,\beta}$ such that $l_k$ is a $\halt$ command.
\end{enumerate}

Observe that the machine $\M$ is deterministic. We say that $\M$ 0-halts if there exists $l\in L$ that is a $\halt$ command, such that the run of $\M$ ends in $\zug{l,0,0}$.

We say that a sequence of commands $\tau\in L^*$ {\em fits} a run $\CMrun$, if $\tau$ is the projection of $\CMrun$ on its first component.

We construct from $\M$ an $\AvDLTL$ formula $\varphi$ such that $\M$ 0-halts iff there exists a computation $\pi$ such that $\sem{\pi,\varphi}\ge \frac12$. The idea behind the construction is as follows. $\phi$ reads computations over the atomic propositions $AP=\set{1,...,n,\#,x,y}$, where $1,...,n$ are the commands of $M$. The computation that $\varphi$ reads corresponds to a description of a run of $\M$, where every triplet $\zug{l_i,\alpha,\beta}$ is encoded as the string $i x^\alpha y^\beta \#$. We ensure that computations that satisfy $\phi$ with value greater than $0$ are such that in every position only a single atomic proposition is true.

\begin{example}
Consider the following machine $\M$:
\begin{enumerate}
\item[$l_1$:] $\inc(x)$
\item[$l_2$:] $\jz{x}{l_6}{l_3}$
\item[$l_3$:] $\inc(y)$
\item[$l_4$:] $\dec(x)$
\item[$l_5$:] $\goto(l_2)$
\item[$l_6$:] $\dec(y)$
\item[$l_7$:] $\halt$
\end{enumerate}
The command sequence that represents the run of this machine is
$$\zug{l_1,0,0},\zug{l_2,1,0},\zug{l_3,1,0},\zug{l_4,1,1},\zug{l_5,0,1},\zug{l_2,0,1},\zug{l_6,0,1},\zug{l_7,0,0}$$
and the encoding of it as a computation is
$$1\# 2x\# 3x\# 4xy\# 5y\# 2y\# 6y\# 7\#$$
\end{example}

The formula $\varphi$ ``states'' (recall that the setting is quantitative, not Boolean) the following properties of the computation $\pi$:
\begin{enumerate}
\item The first configuration in $\pi$ is the initial configuration of $\M$ ($\zug{l_1,0,0}$, or $1\#$ in our encoding).
\item The last configuration in $\pi$ is $\zug{l,0,0}$ (or $k$ in our encoding), where $l$ can be any line whose command is $\halt$.
\item $\pi$ represents a legal run of $\M$, up to the consistency of the counters between transitions.
\item The counters are updated correctly between configurations.
\end{enumerate}
Properties 1-3 are can easily be specified by a $\LTL$ formulas, such that computations which satisfy properties 1-3 get satisfaction value $1$. Property 4 utilizes the expressive power of $\AvDLTL$, as we now demonstrate.
The intuition behind property 4 is the following. We need to compare the value of a counter before and after a command, such that the formula takes a low value if a violation is encountered, and a high value otherwise. Specifically, the formula we construct takes value $\frac12$ if no violation occurred, and a lower value if a violation did occur.

We start with a simpler case, to demonstrate the point. Let $\df\in \D$ be a discounting function.
Consider the formula $CountA:=a\Until_{\df}\neg a$ and the computation $a^ib^j\#^\omega$. It holds that $\sem{a^ib^j,CountA}=\df(i)$. Similarly, it holds that $\sem{a^ib^j\#^\omega,a\Until(b\Until_{\df}\neg b)}=\df(j)$. Denote the latter by $CountB$. Let 
$$CompareAB:=(CountA\avg{}\neg CountB)\wedge (\neg CountA\avg{} CountB).$$ 
We now have that
\begin{align*}
&\sem{a^ib^j\#^\omega,CompareAB} = 
\min\set{\frac{\df(i)+1-\df(j)}{2},\frac{\df(j)+1-\df(i)}{2}}=\frac12-\frac{|\df(i)-\df(j)|}{2}
\end{align*}
and observe that the latter is $\frac12$ iff $i=j$, and is less than $\frac12$ otherwise. This is because $\df$ is strictly decreasing, and in particular an injection.

Thus, we can compare counters. To apply this technique to the encoding of a computation, we only need some technical formulas to ``parse'' the input and find consecutive occurrences of a counter.

We now dive into the technical definition of $\varphi$. The atomic propositions are $AP=\set{1,...,n,\#,x,y}$ (where $l_1,...,l_n$ are the commands of $\M$).
We let
$\varphi:=CheckCmds \wedge CheckInit \wedge CheckFinal\wedge CheckCounters\wedge ForceSingletons$
\vspace*{-3mm}
\paragraph*{ForceSingletons:} This formula ensures that for a computation to get a value of more than $0$, every letter in the computation must be a singleton. Formally,
$$ForceSingletons:=\Alw\left(\bigvee_{p\in AP}(p\wedge \bigwedge_{q\in AP\setminus\set{p}}\neg q)\right).$$

\paragraph*{CheckInit and CheckFinal:}
These formulas check that the initial and final configurations are correct, and that after the final configuration, there are only $\#$s.
$$CheckInit:=1\wedge \Next\#.$$

Let $I=\set{i: l_i=\halt}$, we define
$$CheckFinal:=\Alw((\bigvee_{i\in I} i )\to \Next\Alw \#).$$
Note that $CheckFinal$ also ensures that there counters are 0.

\vspace*{-3mm}
\paragraph*{CheckCmds:} This formula verifies that the local transitions follow the instructions in the counter machine, ignoring the consistency of the counter values, but enforcing that a jump behaves according to the counters.
We start by defining, for every $i\in \set{1,...,n}$, the formula:
$${\it waitfor}(i):=(x\vee y)\Until (\#\wedge \Next i)$$
Intuitively, a computation satisfies this formula (i.e., gets value $1$) iff it reads counter descriptions until the next delimiter, and the next command is $l_i$.

Now, for every $i\in \set{1,...,n}$ we define $\psi_i$ as follows.
\begin{itemize}
\item If $l_i=\goto l_j$, then $\psi_i:=\Next waitfor(j)$.
\item If $l_i\in\set{\inc(c),\dec(c): c\in \set{x,y}}$, then $\psi_i:=\Next waitfor(i+1)$. \footnote{if $i=n$ then this line can be omitted from the initial machine, so w.l.o.g this does not happen.}
\item If $l_i=\jz{x}{l_j}{l_k}$ then
$\psi_i:=\Next((x\to waitfor(k)) \wedge ((\neg x)\to waitfor(j)))$.
\item If $l_i=\jz{y}{l_j}{l_k}$ then
$\psi_i:=\Next(((x\Until y)\wedge waitfor(k)) \vee ((x\Until \#)\wedge waitfor(j)))$.
\item If $l_i=\halt$ we do not really need additional constraints, due to $CheckFinal$. Thus we have $\psi_i=\True$.
\end{itemize}
Finally, we define
$CheckCmds:=\Alw\bigwedge_{i\in \set{1,...,n}}(i\to \psi_i).$

\vspace*{-3mm}
\paragraph*{CheckCounters:} This is the heart of the construction. The formula checks whether consecutive occurrences of the counters match the transition between the commands.
We start by defining
$countX:=x\Until_{\df}\neg x$ and $countY:=x\Until(y\Until_{\df}\neg y)$. Similarly, we have $countX^{-1}=x\Until_{\df}\Next \neg x$ and $countY^{-1}=x\Until(y\Until_{\df} \Next \neg y)$.

We need to define a formula to handle some edge cases. 

Let $I_{\halt}=\set{i: l_i=\halt}$, and similarly define $I_{\dec(x)}$ and $I_{\dec(y)}$. We define
\begin{align*}
&Last:=\bigvee_{i\in I_{\dec(x)}}i\wedge \Next (x\wedge \Next (\#\wedge \Next \bigvee_{k\in I_{\halt}}k))\vee\\
& \bigvee_{i\in I_{\dec(y)}}i\wedge \Next (y\wedge \Next (\#\wedge \Next \bigvee_{k\in I_{\halt}}k))\vee\\
&\bigvee_{i\in \set{1,...,n}}i\wedge \Next (\#\wedge \Next \bigvee_{k\in I_{\halt}}k)
\end{align*}

Intuitively, $Last$ holds exactly in the last transition, that is - before the final 0-halting configuration.

Testing the counters involves six types of comparisons: checking equality, increase by 1, and decrease by 1 for each of the two counters. We define the formulas below for these tests. To explain the formulas, consider for example the formula $Comp(x,=)$. This formula compares the number of $x$'s in the current configuration, with the number of $x$'s in the next configuration. The comparison is based on the comparison we explained above, and is augmented by some parsing, as we need to reach the next configuration before comparing.
\begin{itemize}
\item $Comp(x,=):= \left(countX\avg{}(x\Until(y\Until(\#\wedge \Next\Next \neg countX)))\right)$

$\wedge \left((\neg countX)\avg{}(x\Until(y\Until(\#\wedge \Next\Next countX)))\right)$.

\item $Comp(y,=):= \left(countY\avg{}(x\Until(y\Until(\#\wedge \Next\Next \neg countY)))\right)$

$\wedge \left((\neg countY)\avg{}(x\Until(y\Until(\#\wedge \Next\Next countY)))\right)$

\item $Comp(x,+1):=\left( countX\avg{}(x\Until(y\Until(\#\wedge \Next\Next \neg countX^{-1})))\right)$

$\wedge\left( (\neg countX)\avg{}(x\Until(y\Until(\#\wedge \Next\Next countX^{-1})))\right)$

\item $Comp(y,+1):= \left(countY\avg{}(x\Until(y\Until(\#\wedge \Next\Next \neg countY^{-1})))\right)$

$\wedge \left((\neg countY)\avg{}(x\Until(y\Until(\#\wedge \Next\Next countY^{-1})))\right)$

\item $Comp(x,-1):=\left( countX^{-1}\avg{}(x\Until(y\Until(\#\wedge \Next\Next \neg countX)))\right)$

$\wedge\left( (\neg countX^{-1})\avg{}(x\Until(y\Until(\#\wedge \Next\Next countX)))\right)$

\item $Comp(y,-1):= \left(countY^{-1}\avg{}(x\Until(y\Until(\#\wedge \Next\Next \neg countY)))\right)$

$\wedge \left((\neg countY^{-1})\avg{}(x\Until(y\Until(\#\wedge \Next\Next countY)))\right)$
\end{itemize}

Now, for every $i\in \set{1,...,n}$ we define $\xi_i$ as follows.
\begin{itemize}
\item If $l_i\in \set{\goto l_j,\jz{c}{l_j}{l_k}:c\in \set{x,y}}$ , we need to make sure the value of the counters do not change. We define\\
$\xi_i:=(X(comp(x,=) \wedge comp(y,=))\vee Last.$
\item If $l_i=\inc(x)$, we need to make sure that $x$ increases and $y$ does not change. We define\\
$\xi_i:= (X(comp(x,+1) \wedge comp(y,=))\vee Last.$
\item If $l_i=\inc(y)$, we define\\
$\xi_i:= (X(comp(x,=) \wedge comp(y,+1))\vee Last.$
\item If $l_i=\dec(x)$, we define\\
$\xi_i:=(X(comp(x,-1) \wedge comp(y,=))\vee Last.$
\item If $l_i=\dec(y)$, we define\\
$\xi_i:=(X(comp(x,=) \wedge comp(y,-1))\vee Last.$
\item If $l_i=\halt$ we do not need additional constraints, due to $CheckFinal$. Thus we have $\psi_i=\True$.
\end{itemize}
Finally, we define
$CheckCounters:=\Alw\bigwedge_{i\in \set{1,...,n}}(i\to \xi_i)$

The correctness of the construction is obvious, once one verifies that the defined formulas indeed test what they claim to.
Thus, we conclude that $\M$ 0-halts iff there exists a computation $\pi$ such that $\sem{\pi,\varphi}\ge \frac12$. 

Finally, we reduce the $\frac12$-co-validity problem to the complement of the validity problem: given a formula $\phi$, the reduction outputs $\zug{\neg\phi,\frac12}$. Now, there exists a computation $\pi$ such that $\sem{\pi,\phi}\ge\frac12$ iff there exists a computation $\pi$ such that $\sem{\pi,\neg\phi}\le \frac12$, iff 
it is not true that $\sem{\pi,\neg \phi}>\frac12$ for every computation $\pi$. Thus, $\phi$ is $\frac12$-co-valid iff $\neg\phi$ is not valid for threshold $\frac12$. We conclude that the validity problem is undecidable.

\end{proof}

Studying the proof of Theorem~\ref{thm:UndecidableAverage}, we can actually formulate the reduction more carefully as follows.
\begin{lemma}
\label{lem:reductionProperties}
Given a two-counter machine $\M$ that is promised to either $0$-halt, or not to halt at all, there exists an $\AvDLTL$ formula $\phi$ such that for every computation $\pi$ that represents a computation of $\M$, the following hold.
\begin{enumerate}
\item If $\pi$ is a legal halting computation of $\M$, then $\sem{\pi,\phi}=\frac12$.
\item If $\pi$ cheats in a transition between commands, then $\sem{\pi,\phi}=1$.
\item If $\pi$ cheats in the counter values, then $\sem{\pi,\phi}=\frac12+\epsilon$ such that $\epsilon\ge \frac12(\df(i)-\df(i+1))$ for the minimal difference $\df(i)-\df(i+1)$ where $i$ is a counter value in $\pi$.

\end{enumerate}

\end{lemma}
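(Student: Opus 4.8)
The plan is to take $\phi:=\neg\varphi$, where $\varphi$ is \emph{exactly} the formula built in the proof of Theorem~\ref{thm:UndecidableAverage}, and to obtain the three claimed values by complementing the evaluations already carried out there, using $\sem{\pi,\neg\varphi}=1-\sem{\pi,\varphi}$. Recall that $\varphi$ is a conjunction $CheckCmds\wedge CheckInit\wedge CheckFinal\wedge CheckCounters\wedge ForceSingletons$, whose first, second, third and fifth conjuncts are Boolean \LTL\ formulas (taking values in $\set{0,1}$) and whose fourth conjunct, $CheckCounters$, is the only one that can take an intermediate value. Since a $\wedge$ is interpreted as $\min$, I would organize the argument around which conjunct is responsible for the value in each regime, and then flip each value by subtracting from $1$.

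For a legal halting computation (Case 1), all four Boolean conjuncts evaluate to $1$, and each counter comparison inside $CheckCounters$ matches, so by the $CompareAB$ computation (with $i=j$) every $Comp(\cdot)$ block evaluates to exactly $\frac12$; hence $CheckCounters=\frac12$ and $\sem{\pi,\varphi}=\min\set{1,1,1,\frac12,1}=\frac12$, giving $\sem{\pi,\phi}=\frac12$. For a transition cheat (Case 2), the Boolean formula $CheckCmds$ (which enforces the local command-to-command structure, including the jump discipline) is violated and evaluates to $0$; as a conjunct this forces $\sem{\pi,\varphi}=0$, hence $\sem{\pi,\phi}=1$. Here I would also remark that a computation cheating in \emph{both} a transition and the counters falls under this case, since the Boolean $0$ dominates the $\min$; this is what makes the three cases exhaustive for representing computations.

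The quantitative core is Case 3. When $\pi$ cheats only in the counters, the Boolean conjuncts are still $1$, while at least one $Comp$ block compares two discounted counts $\df(a)$ and $\df(b)$ with $a\neq b$, yielding $\frac12-\frac{|\df(a)-\df(b)|}{2}$. Because $CheckCounters$ is a $\min$ over all positions and both counters (each $\xi_i$ is $\Next(Comp(x,\cdot)\wedge Comp(y,\cdot))\vee Last$, and the $Last$ disjunct is Boolean-$1$ only at the final decrement-before-halt, so it never spuriously raises the value elsewhere), we get $CheckCounters=\frac12-\epsilon$ with $\epsilon=\tfrac12\max|\df(a)-\df(b)|>0$, and therefore $\sem{\pi,\phi}=\frac12+\epsilon$. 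The lower bound then follows from strict monotonicity of $\df$: since counter values are integers, any cheat has $|a-b|\ge 1$, and telescoping gives $|\df(a)-\df(b)|\ge \df(\min(a,b))-\df(\min(a,b)+1)\ge \min_i(\df(i)-\df(i+1))$ over the counter values $i$ occurring in $\pi$; as $\epsilon$ is at least half of any single such discrepancy, $\epsilon\ge \tfrac12(\df(i)-\df(i+1))$ for the minimal consecutive gap, as required.

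I expect the main obstacle to be purely bookkeeping rather than conceptual: one must verify that the nested $\min$/$\max$/$\avg{}$ structure of $CheckCounters$ and of the individual $Comp$ blocks does not leak any value other than the intended $\frac12\pm\frac{|\df(a)-\df(b)|}{2}$, paying particular attention to the edge-case disjunct $Last$ and to the shifted counts $countX^{-1},countY^{-1}$ used by the $\pm1$ comparisons, so that an honest increment/decrement still registers as a match (value $\frac12$) and a dishonest one registers as a strict discrepancy. The promise that $\M$ either $0$-halts or does not halt is inherited from Theorem~\ref{thm:UndecidableAverage} and is used only to guarantee that the gap structure (legal run $=\frac12$, every cheat $>\frac12$) is the relevant one; it plays no role in the per-computation value computation itself.
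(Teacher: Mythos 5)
Your proposal is correct and takes essentially the same route as the paper, which derives the lemma by direct inspection of the construction in Theorem~\ref{thm:UndecidableAverage}: you rightly observe that the stated values ($1$ for a command cheat, $\frac12+\epsilon$ for a counter cheat) force $\phi$ to be the \emph{negation} of the formula $\varphi$ built there --- a step the paper leaves implicit but which its later uses of the lemma confirm --- and your $\min$-based case analysis together with the telescoping bound $|\df(a)-\df(b)|\ge \df(\min(a,b))-\df(\min(a,b)+1)$ correctly establishes all three items. No gaps.
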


We now turn to show that the {\em strict} model-checking problem and the {\em strict} model-checking problem are undecidable for $\AvDLTL$ as well. The strict model-checking problem is to decide, given a Kripke structure $\K$, a formula $\phi$, and a threshold $v$, whether $\sem{\K,\phi}>v$.

%

\begin{theorem}
\label{thm:strict model checking und}
The strict model-checking problem for $\AvDLTL$ is undecidable (for every nonempty set of Discounting functions $\D$).
\end{theorem}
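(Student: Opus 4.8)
The plan is to reduce the strict model-checking problem to the $\frac12$-co-validity problem established in the proof of Theorem~\ref{thm:UndecidableAverage}, exploiting the refined quantitative guarantees of Lemma~\ref{lem:reductionProperties}. Given a two-counter machine $\M$ promised to either $0$-halt or not halt at all, I would take the formula $\phi$ of Lemma~\ref{lem:reductionProperties} and observe that its satisfaction value over computations that represent runs of $\M$ is tightly controlled: legal halting runs give exactly $\frac12$, transition-cheats give $1$, and counter-cheats give $\frac12+\epsilon$ with $\epsilon>0$. The key point is that by the promise, if $\M$ does \emph{not} $0$-halt, then no computation representing a run of $\M$ is a legal halting run, so every such computation attains a value strictly greater than $\frac12$ (either $1$ or $\frac12+\epsilon$); whereas if $\M$ does $0$-halt, there is a computation attaining exactly $\frac12$.

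First I would build a Kripke structure $\K_{\M}$ whose computations are precisely all the (infinite) computations over $AP=\set{1,\ldots,n,\#,x,y}$ that are candidate encodings of runs of $\M$ — that is, $\K_{\M}$ generates all sequences of the form $i\,x^{\alpha}y^{\beta}\#\cdots$ padded with $\#^{\omega}$ after a halt symbol. Such a $\K_{\M}$ is trivial to construct: it is essentially the full structure over the alphabet, possibly with a few local syntactic constraints (or none at all, leaving the syntactic filtering to the $ForceSingletons$ and $CheckCmds$ conjuncts already inside $\phi$). I would then set the threshold to $v=\frac12$ and claim that $\sem{\K_{\M},\phi}>\frac12$ iff $\M$ does \emph{not} $0$-halt. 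Recall $\sem{\K_{\M},\phi}=\inf\set{\sem{\pi,\phi}\ST \pi\text{ a computation of }\K_{\M}}$, so the strict inequality over the infimum is what we must analyze.

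The crux is matching the infimum against the threshold correctly, and this is where the main obstacle lies. If $\M$ $0$-halts, the unique legal halting computation $\pi^{\star}$ has $\sem{\pi^{\star},\phi}=\frac12$ by item~1, so the infimum is at most $\frac12$, giving $\sem{\K_{\M},\phi}\not>\frac12$. If $\M$ does not halt, every computation of $\K_{\M}$ either cheats in a transition (value $1$) or cheats in a counter (value $\frac12+\epsilon>\frac12$), since there is no legal halting run to attain $\frac12$. The delicate part is that the counter-cheating values $\frac12+\epsilon$ can approach $\frac12$ as the cheated counter value $i$ grows, because $\epsilon\ge\frac12(\df(i)-\df(i+1))$ and $\df(i)-\df(i+1)\to 0$. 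Hence the infimum over all non-halting computations may well equal $\frac12$ even though no single computation attains it. This is exactly why I target the \emph{strict} inequality $>v$: we need $\sem{\K_{\M},\phi}>\frac12$ to fail in the halting case (which it does, as the value $\frac12$ is attained) and to hold in the non-halting case.

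To secure the non-halting case despite the infimum possibly equalling $\frac12$, I would argue that the infimum is \emph{not attained} and is therefore still compatible with deciding $0$-halting: more precisely, I would reduce from the complement, arranging the reduction so that "$\M$ does not halt" corresponds to "no computation of $\K_{\M}$ reaches value $\frac12$", i.e. $\sem{\pi,\phi}>\frac12$ for every computation $\pi$. Since every computation strictly exceeds $\frac12$ in the non-halting case, and exactly one computation equals $\frac12$ in the halting case, the predicate "$\sem{\pi,\phi}>\frac12$ holds for all $\pi$" precisely separates the two promised alternatives. This is a universally-quantified (validity-style) statement over the computations of $\K_{\M}$, so I would phrase the strict model-checking instance to capture it: asking whether $\sem{\K_{\M},\neg\phi}>\frac12$ is equivalent to asking whether every computation of $\K_{\M}$ has $\sem{\pi,\phi}<\frac12$, which we adapt by using the complement threshold exactly as in the final paragraph of Theorem~\ref{thm:UndecidableAverage}. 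Carrying the $\frac12$-co-validity-to-validity dualization through $\K_{\M}$ rather than over all words is the one step requiring care, but it is routine given Lemma~\ref{lem:reductionProperties}; the undecidability of the promise problem for $0$-halting then yields undecidability of strict model-checking.
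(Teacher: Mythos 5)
There is a genuine gap, and it sits exactly at the point you flagged but did not resolve. The strict model-checking problem asks whether $\sem{\K,\phi}>v$, where by definition $\sem{\K,\phi}=\inf\set{\sem{\pi,\phi} \ST \pi \text{ a computation of } \K}$. This is a statement about the \emph{value of the infimum}, not about pointwise strict inequality. In the non-halting case your own analysis shows that computations cheating at arbitrarily large counter values have $\phi$-value $\frac12+\epsilon$ with $\epsilon\to 0$, so the infimum equals $\frac12$ even though no computation attains it; hence $\sem{\K_\M,\phi}>\frac12$ is \emph{false} in that case, exactly as it is false in the halting case, and your query does not separate the two alternatives. Saying ``the infimum is not attained and is therefore still compatible with deciding $0$-halting'' does not rescue this: the oracle for strict model checking sees only whether $\inf_\pi\sem{\pi,\phi}>\frac12$, and it cannot distinguish an unattained infimum of $\frac12$ from an attained one. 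The dualization you propose fails for the same reason: $\sem{\K,\neg\phi}>\frac12$ is equivalent to $\sup_\pi\sem{\pi,\phi}<\frac12$, again a statement about an extremal value rather than about every computation individually, and moreover it tests the wrong side of the threshold for your purposes.

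The paper's proof is built precisely to get around this. It first asks whether $\sem{\K,\phi}>\frac12$ (if yes, $\M$ does not halt), and when the answer is no it introduces an auxiliary formula $\psi=\True\avg{}\bigl(\Alw(x\Until_\df\neg x)\wedge\Alw(y\Until_\df\neg y)\bigr)$, whose value on an encoding is $\frac12+\frac12\inf\set{\df(i) \ST i \text{ a counter value in } \pi}$, i.e.\ it is bounded away from $\frac12$ exactly when the counters appearing in $\pi$ are bounded. Setting $\xi=\phi\vee\psi$, one shows that if $\M$ halts then \emph{every} computation has $\sem{\pi,\xi}>\frac12+\epsilon$ for a uniform $\epsilon>0$ (the legal run has bounded counters, command-cheats give $1$, and any counter-cheat must first occur at a bounded counter value), so $\sem{\K,\xi}>\frac12$; whereas if $\M$ does not halt, large-counter cheats drive $\sem{\pi,\xi}$ down to $\frac12$, so $\sem{\K,\xi}=\frac12$. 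A second strict query on $\xi$ then completes the (Turing) reduction. This auxiliary formula, which converts the non-uniform pointwise gap into a uniform gap over the infimum, is the missing idea in your proposal.
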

\begin{proof}
Assume by way of contradiction that the strict model-checking problem is decidable. We show how to decide the $0$-halting promise problem for two-counter machines, thus reaching a contradiction.

Given a two-counter machine $\M$ that is promised to either $0$-halt, or not halt at all, construct the formula $\phi$ as per Lemma~\ref{lem:reductionProperties}, and consider the Kripke structure $\K$ that generates every computation. Observe that by Lemma~\ref{lem:reductionProperties} it holds that $\sem{\K,\phi}\ge \frac12$.

Decide whether $\sem{\K,\phi}>\frac12$. If $\sem{\K,\phi}>\frac12$, then for every computation $\pi$ it holds that $\sem{\pi,\phi}>\frac12$, and by Lemma~\ref{lem:reductionProperties} we conclude that $\M$ does not halt.

If $\sem{\K,\phi}\le \frac12$, then $\sem{\K,\phi}=\frac12$. We observe that there are now two possible cases:
\begin{enumerate}
\item $\M$ halts.
\item $\M$ does not halt, and for every $n$, there are computations that reach $\halt$ while cheating in counter values larger than $n$, and not cheating in the commands.
\end{enumerate}

We show how to distinguish between cases 1 and 2.

Consider the $\AvDLTL$ formula 
$$\psi=\True\avg{}(\Alw (x\Until_\df \neg x)\wedge \Alw(y\Until_\df \neg y)).$$

It is not hard to verify that for every computation $\pi$ that represents a computation of $\M$, it holds that $\sem{\pi,\psi}=\frac12 +\frac12\epsilon$, where $\epsilon=\inf(\df(i): i\text{ is the value of a counter in }\pi)$.
Let $\xi=\phi\vee \psi$. 

If $\M$ halts (case 1), then for every computation $\pi$ we have one of the following.
\begin{itemize}
\item[a.] $\pi$ describes a legal halting run of $\M$, in which case $\sem{\pi,\phi}=\frac12$ and $\sem{\pi,\psi}>\frac12+\epsilon$ for some $\epsilon>0$ (independent of $\pi$), since the counters are bounded. Thus, $\sem{\pi,\xi}>\frac12+\epsilon$.
\item[b.] $\pi$ cheats in the commands, in which case $\sem{\pi,\phi}=1$, so $\sem{\pi,\xi}=1$.
\item[c.] $\pi$ cheats in the counters, in which case, since the counters are bounded, the first cheat must occur with small counters, and thus $\sem{\pi,\phi}>\frac12+\epsilon$ for some $\epsilon>0$ independent of $\pi$. So $\sem{\pi,\xi}>\frac12+\epsilon$.
\end{itemize}
In all three cases, we get that $\sem{\pi,\xi}>\frac12+\epsilon$, so $\sem{\K,\xi}>\frac12$.

If $\M$ does not halt (case 2), then for every $n$ and for every computation $\pi$ that cheats with counters larger than $n$, it holds that $\sem{\pi,\phi}<\frac12+\epsilon$ where $\epsilon=\epsilon(n)\to 0$ as $n\to\infty$, and since the counters in $\pi$ are large, it also holds that $\sem{\pi,\psi}<\frac12+\epsilon'$, where $\epsilon'=\epsilon'(n)\to 0$ as $n\to\infty$. We conclude that there exists a sequence of computations whose satisfaction values in $\xi$ tend to $\frac12$, and thus $\sem{\K,\xi}=\frac12$.

Thus, in order to distinguish between cases 1 and 2, it is enough to decide whether $\sem{\K,\xi}>\frac12$.  

To conclude, the algorithm for deciding whether $\M$ 0-halts is as follows.
Start by constructing $\phi$. If $\sem{\K,\phi}>\frac12$, then $\M$ does not halt. Otherwise, construct $\xi$. If $\sem{\K,\xi}>\frac12$, then $\M$ halts, and otherwise $\M$ does not halt.
\end{proof}

\begin{theorem}
\label{thm:model checking und}
The model-checking problem for $\AvDLTL$ is undecidable (for every nonempty set of Discounting functions $\D$).
\end{theorem}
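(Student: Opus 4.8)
The plan is to reduce the (non-strict) model-checking problem for $\AvDLTL$ from the same $0$-halting promise problem for two-counter machines used in Theorems~\ref{thm:UndecidableAverage} and~\ref{thm:strict model checking und}, but now phrasing everything in terms of a \emph{non-strict} threshold query $\sem{\K,\phi}\ge v$. As in Theorem~\ref{thm:strict model checking und}, I would fix a machine $\M$ promised to either $0$-halt or not halt at all, build the formula $\phi$ of Lemma~\ref{lem:reductionProperties}, and let $\K$ be the Kripke structure generating all computations, so that $\sem{\K,\phi}\ge\frac12$ always holds. The goal is to arrange matters so that a \emph{non-strict} model-checking oracle suffices to distinguish the two promised cases, yielding a contradiction.

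The key observation I would exploit is the asymmetry already present in Lemma~\ref{lem:reductionProperties}: a legal halting run scores exactly $\frac12$, a command-cheat scores $1$, and a counter-cheat scores strictly above $\frac12$. Hence $\sem{\K,\phi}=\frac12$ exactly when $\M$ admits either a genuine $0$-halting run or counter-cheating runs whose values approach $\frac12$ from above, and $\sem{\K,\phi}>\frac12$ precisely when $\M$ never halts and all cheats are bounded away from $\frac12$ (the ``does not halt'' branch). First I would note that this infimum is attained or approached but, crucially, the value $\frac12$ is realized by a genuine run \emph{only} in the halting case. To turn this into a non-strict decision I would combine $\phi$ with the auxiliary formula $\psi=\True\avg{}(\Alw(x\Until_\df\neg x)\wedge\Alw(y\Until_\df\neg y))$ from the previous proof, whose value $\frac12+\frac12\inf_i\df(i)$ over a computation detects whether the counters in that computation stay bounded. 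Setting $\xi=\phi\vee\psi$ as before, the halting case gives $\sem{\K,\xi}>\frac12$ while the non-halting case gives $\sem{\K,\xi}=\frac12$; the point is that these two outcomes are now separated by the \emph{value} $\frac12$ rather than by a strict-versus-nonstrict comparison.

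The remaining work is to show that the separation can be read off by a non-strict oracle. Here I would pick a rational threshold $v$ strictly between $\frac12$ and the uniform lower bound $\frac12+\epsilon$ guaranteed in the halting case of the analysis of $\xi$ (case~1 of Theorem~\ref{thm:strict model checking und}), where $\epsilon>0$ is determined by the bound on the counters along legal and early-cheating runs. Since in the halting case every computation satisfies $\sem{\pi,\xi}>\frac12+\epsilon\ge v$, we get $\sem{\K,\xi}\ge v>\frac12$, whereas in the non-halting case $\sem{\K,\xi}=\frac12<v$. Thus a single non-strict query ``is $\sem{\K,\xi}\ge v$?'' distinguishes the cases, and composing this with the earlier non-strict test on $\phi$ decides the promise problem. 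The main obstacle, and the step requiring the most care, is establishing the uniform $\epsilon$ in the halting case: one must argue that since a halting run bounds both counters by some $N$, \emph{every} computation---including those that cheat in the counters or in the commands---has $\xi$-value bounded away from $\frac12$ by a quantity depending only on $N$ (through the gaps $\df(i)-\df(i+1)$ for $i\le N$), so that the chosen rational $v$ genuinely lies below the infimum. Once this uniform gap is secured, the reduction goes through and undecidability of model checking for $\AvDLTL$ follows.
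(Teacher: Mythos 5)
There is a genuine gap, and it is fatal to the reduction as you describe it: the threshold $v$ you query is not computable from $\M$. You propose to ``pick a rational threshold $v$ strictly between $\frac12$ and the uniform lower bound $\frac12+\epsilon$ guaranteed in the halting case,'' where $\epsilon$ is determined by the bound $N$ on the counters along the halting run. But $N$ (and hence $\epsilon$, which comes from the gaps $\df(i)-\df(i+1)$ for $i\le N$) can only be obtained by simulating $\M$ until it halts --- i.e., by already solving the problem you are reducing from. A many-one reduction must output the instance $\zug{\K,\xi,v}$ without knowing whether $\M$ halts, so this choice of $v$ is not an effective step. Enumerating thresholds $v_k\downarrow\frac12$ and querying each only yields a semi-decision procedure for halting, which is no contradiction. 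A second, related problem: your claim that ``the non-halting case gives $\sem{\K,\xi}=\frac12$'' is only true in the sub-case where $\M$ admits counter-cheats at arbitrarily large counter values; in the other non-halting sub-case (all cheats occur at bounded counter values) one has $\sem{\K,\xi}\ge\sem{\K,\phi}>\frac12$, indistinguishable from the halting case. The ``earlier test on $\phi$'' you invoke to separate these sub-cases is the \emph{strict} query $\sem{\K,\phi}>\frac12$ of Theorem~\ref{thm:strict model checking und}, and turning it into a non-strict query runs into exactly the same uncomputable-gap obstacle.

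The paper's proof avoids both issues by changing the formula rather than the threshold. It sets $\theta=(\neg\phi)\avg{}\psi$, where $\psi$ is built from the consecutive differences $\df(i)-\df(i+1)$ so that, on every computation, the ``bonus'' contributed by $\psi$ is dominated by the ``penalty'' that a counter-cheat inflicts on $\neg\phi$ (and a command-cheat forces $\sem{\pi,\neg\phi}=0$). Consequently $\sem{\pi,\theta}\le\frac12$ for every computation when $\M$ does not halt, while the genuine halting computation gives $\sem{\pi,\theta}>\frac12$ when $\M$ halts. Then $\M$ halts iff some computation $\pi$ has $\sem{\pi,\theta}>\frac12$, i.e., iff $\sem{\K,\neg\theta}\ge\frac12$ \emph{fails} --- a single non-strict model-checking query at the fixed, computable threshold $\frac12$. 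The key idea you are missing is this averaging trick that cancels the uncomputable gap exactly at $\frac12$, instead of trying to locate a computable rational inside that gap.
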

\begin{proof}
Recall that for every Kripke structure $\K$ and formula $\phi$ it holds that $
\sem{\K,\phi}\ge v$ iff there does not exist a computation $\pi$ of $\K$ such that $\sem{\pi,\neg \phi}>1-v$. 

We show that the latter problem is undecidable, even if we fix $\K$ to be the system that generates every computation. 

We show a reduction from the $0$-halting promise problem to the latter problem.
Given a two-counter machine $\M$ that is promised to either $0$-halt, or not halt at all, construct the formula $\phi$ as per Lemma~\ref{lem:reductionProperties} and the formula $\psi$ such that for every computation $\pi$ we have that $\sem{\pi,\psi}=\frac12 +\frac12\epsilon$, where $\epsilon=\inf(\df(i)-\df(i+1): i\text{ is the value of a counter in }\pi)$. The formula $\psi$ can be defined as
$$\psi=\True\avg{}(\Alw ((x\vee \Next x)\Until_\df (\neg x\wedge \neg \Next x)\wedge (y\vee \Next y)\Until_\df (\neg y\wedge \neg \Next y))$$.

Let $\theta=(\neg\phi) \avg{} \psi$. We claim that $\M$ halts iff there exists a computation $\pi$ such that $\sem{\pi,\theta}>\frac12$.

If $\M$ halts, then for the computation $\pi$ that describes the halting run of $\M$ it holds that $\sem{\pi,\phi}=\frac12$, and thus $\sem{\pi,\neg\phi}=\frac12$. Since the counters in $\pi$ are bounded (as the run is halting), then $\sem{\pi,\psi}>\frac12$, and thus $\sem{\pi,\theta}>\frac12$.

If $\M$ does not halt, consider a computation $\pi$.
\begin{itemize}
\item If $\pi$ cheats in the commands, then $\sem{\pi,\neg\phi}=0$, so $\sem{\pi,\theta}=0+\frac12\sem{\pi,\psi}\le \frac12$.
\item If $\pi$ cheats in the counters, then $\sem{\pi,\neg\phi}=\frac12-\frac12\epsilon$ and $\sem{\pi,\psi}=\frac12+\frac12\epsilon'$, where $\epsilon\ge \frac12 (\df(i)-\df(i+1))=\epsilon'$ for the smallest difference $\df(i)-\df(i+1)$ in $\pi$. 
Thus, $\sem{\pi,\theta}\le \frac12$.
\end{itemize}

\end{proof}

\subsection{Adding Unary Multiplication Operators}
\label{subsec: adding unary}
As we have seen in Section~\ref{sec:average}, adding the operator $\avg{}$ to $\DLTL$ makes model checking undecidable. One may still want to find propositional quality operators that we can add to the logic preserving its decidability. In this section we describe one such operator. We extend $\DLTL$ with the operator $\factorU_\lambda$, for $\lambda\in (0,1)$, with the semantics $\sem{\pi,\factorU_\lambda \phi}=\lambda\cdot \sem{\pi,\phi}$. This operator allows the specifier to manually change the satisfaction value of certain subformulas. This can be used to express importance, reliability, etc. of subformulas. For example, in $\Alw({\it  request} \rightarrow ({\it response} \vee \factorU_{\frac{2}{3}}\Next {\it response})$, we limit the satisfaction value of computations in which a response is given with a delay to $\frac{2}{3}$.

Note that the operator $\factorU_\lambda$ is similar to a one-time application of
$\Until_{\dfl{\lambda}^{+1}}$, thus $\factorU_\lambda \phi$ is equivalent to $\False \Until_{\dfl{\lambda}^{+1}} \psi$. In practice, it is better to handle $\factorU_\lambda$ formulas directly, by adding the following transitions to the construction in the proof of Theorem~\ref{thm:DLTL to AWW}.

\hspace{-.6cm}$\delta(\factorU_\lambda \phi>t,\sigma)=\begin{cases}
\delta(\phi>\frac{t}{\lambda},\sigma) & \text{if } \frac{t}{\lambda}<1,\\
\False & \text{if } \frac{t}{\lambda}\ge 1,
\end{cases}$
$\delta(\factorU_\lambda \phi<t,\sigma)=\begin{cases}
\delta(\phi<\frac{t}{\lambda},\sigma) & \text{if } \frac{t}{\lambda}\le 1,\\
\True & \text{if } \frac{t}{\lambda}> 1.
\end{cases}$

\vspace*{-3mm}
\section{Extensions}
\label{sec:ext}

\subsection{$\DLTL$ with Past Operators}
\label{ext:past}
One of the well-known augmentations of $\LTL$ is the addition of {\em past operators\/} \cite{LPZ85}. These operators enable the specification of exponentially more succinct formulas, while preserving the PSPACE complexity of model checking.
In this section, we add {\em discounting-past} operators to $\DLTL$, and show how to perform model-checking on the obtained logic.

We add the operators $\Yest \phi$, $\phi \Since \psi$, and $\phi \Since_\df \phi$ (for $\df\in \D$) to $\DLTL$, and denote the extended logic $\DPLTL$, with the following semantics. For $\DPLTL$ formulas $\phi, \psi$, a function $\df\in \D$, a computation $\pi$, and an index $i\in \Nat$, we have
\begin{itemize}
\item $\sem{\pi^i,\Yest \phi}=\sem{\pi^{i-1},\phi}$ if $i>0$, and $0$ otherwise.
\item $\sem{\pi^i,\phi \Since \psi}=\max_{0\le j\le i}\set{\min\set{\sem{\pi^j,\psi},\min_{j<k\le i}\set{\sem{\pi^k,\phi}}}}$.
\item $\sem{\pi^i,\phi \Since_\df \psi}=
\max_{0\le j\le i}\set{\min\set{\df(i-j)\sem{\pi^j,\psi},\min_{j<k\le i}\set{\df(i-k)\sem{\pi^k,\phi}}}}$.
\end{itemize}
Observe that since the past is finite, then the semantics for past operators can use $\min$ and $\max$ instead of $\inf$ and $\sup$. 

As in $\DLTL$, our solution for the $\DPLTL$ model-checking problem is by translating $\DPLTL$ formulas to automata. The construction extends the construction for the Boolean case, which uses 2-way weak alternating automata ($\twAWW$). The details of the construction appear in Appendix~\ref{apx:LTL with past}. The use of the obtained automata in decision procedures is similar to that in Section~\ref{sec:alg proc}. In particular, it follows that the model-checking problem for $\DPLTL$ with exponential discounting is in PSPACE.

\subsection{Weighted Systems}
\label{wsys}
A central property of the logic $\LTLD$ is that the verified system need not be weighted in order to get a quantitative satisfaction -- it stems from taking into account the delays in satisfying the requirements. Nevertheless, $\LTLD$  also naturally fits weighted systems, where the atomic propositions have a value between $0$ and $1$.

A {\em weighted Kripke structure} is a tuple $\K=\zug{AP,S,I,\rho,L}$, where $AP, S, I$, and $\rho$ are as in Boolean Kripke structures, and $L:S\to [0,1]^{AP}$ maps each state to a weighted assignment to the atomic propositions. Thus, the value $L(s)(p)$ of an atomic proposition $p \in AP$ in a state $s \in S$ is a value in $[0,1]$. The semantics of $\DLTL$ with respect to a weighted computation coincides with the one for non-weighted systems, except that for an atomic proposition $p$, we have that $\sem{\pi, p}=L(\pi_0)(p)$.

It is possible to extend the construction of $\A_{\phi,v}$ described in Section~\ref{sec:DltlToAww} to an alphabet $W^{AP}$, where $W$ is a set of possible values for the atomic propositions. Indeed, we only have to adjust the transition for states that correspond to atomic propositions, as follows: for $p\in AP$, $v\in [0,1]$, and $\sigma\in W^{AP}$, we have that

\noindent {\labelitemi}~$\delta(p>v,\sigma)=\begin{cases}
\True & \text{ if } \sigma(p)>v,\\
\False & \text{ otherwise.}
\end{cases}$
\hspace{.6cm}
\noindent {\labelitemi}~$\delta(p<v,\sigma)=\begin{cases}
\True & \text{ if }\sigma(p)<v,\\
\False & \text{ otherwise.}
\end{cases}$

\subsection{Changing the Tendency of Discounting}\label{sec:DiscountingTendency}
One may observe that in our discounting scheme, the value of future formulas is discounted toward $0$. This, in a way, reflects an intuition that we are pessimistic about the future, or at least we are impatient. While in some cases this fits the needs of the specifier, it may well be the case that we are ambivalent to the future. To capture this notion, one may want the discounting to tend to $\frac12$. Other values are also possible. For example, it may be that we are optimistic about the future, say when a system improves its performance while running and we know that components are likely to function better in the future. We may then want the discounting to tend, say, to $\frac34$. 

To capture this notion, we define the operator $\parUntil_{\df,z}$, parameterized by 
$\df\in \D$ and $z\in [0,1]$, with the following semantics.
$\sem{\pi, \varphi \parUntil_{\df,z} \psi}  =
$\\ $
\sup\limits_{i\ge 0} \{ \min \{\df(i)\sem{\pi^i,\psi}+(1-\df(i))z,  \min\limits_{0\leq j < i} \df(j)\sem{\pi^j,\varphi}+(1-\df(j))z\}\}.$
The discounting function $\df$ determines the rate of convergence, and $z$ determines the limit of the discounting. The longer it takes to fulfill the ``eventuality'', the closer the satisfaction value gets to $z$.
We observe that $\phi\Until_{\df}\psi\equiv \phi\parUntil_{\df,0}\psi$.
\begin{example}
Consider a process scheduler. The scheduler decides which process to run at any given time. The scheduler may also run a defragment tool, but only if it is not in expense of other processes. This can be captured by the formula $\phi=\True \parUntil_{\eta,\frac12} {\rm defrag}$. Thus, the defragment tool is a ``bonus'': if it runs, then the satisfaction value is above $\frac12$, but if it does not run, the satisfaction value is $\frac12$. Treating $1$ as ``good'' and $0$ as ``bad'' means that $\frac12$ is ambivalent.
\end{example}


We claim that Theorem~\ref{thm:DLTL to AWW} holds under the extension of $\DLTL$ with the operator $\parUntil$.
Indeed, the construction of the $\AWW$ is augmented as follows. For $\phi=\psi_1\parUntil_{\df,z}\psi_2$, denote $\frac{t-(1-\df(0))z}{\df(0)}$ by $\tau$. One may observe that the conditions on $\frac{t}{\df(0)}$ correspond to conditions on $\tau$ when dealing with $\parUntil$. Accordingly, the transitions from the state $(\psi_1\parUntil_{\df,z} \psi_2>t)$ are defined as follows.

First, if $t=z$, then $\tau=t$ and we identify the state $(\psi_1\parUntil_{\df,z} \psi_2>t)$ with the state
$(\psi_1\Until \psi_2>t)$. Otherwise, $z\neq t$ and we define:
\begin{itemize}
\item
$\delta((\psi_1\parUntil_{\df,z} \psi_2>t),\sigma)=$
 $\left[\begin{array}{ll}
\delta((\psi_2> \tau),\sigma)\vee & \\

\ \ [\delta((\psi_1>\tau),\sigma)\wedge (\psi_1\parUntil_{\df^{+1},z}\psi_2> t)] & \mbox{if $0\le \tau< 1$},\\
\False &  \mbox{if $\tau \geq 1$},\\
\True &  \mbox{if $\tau < 0$}.
\end{array}
\right.$

\item
$\delta((\psi_1\parUntil_{\df,z} \psi_2<t),\sigma)=
$
$
\left[\begin{array}{ll}
\delta((\psi_2< \tau),\sigma)\wedge & \\

\ \ [\delta((\psi_1<\tau),\sigma)\vee (\psi_1\parUntil_{\df^{+1},z}\psi_2< t)] & \mbox{if $0< \tau\le 1$},\\
\True &  \mbox{if $\tau > 1$},\\
\False &  \mbox{if $\tau \le 0$}.\\
\end{array}
\right.$
\end{itemize}
The correctness of the construction is proved in Appendix~\ref{apx:disc tendency proof}.
\section{Discussion}

An ability to specify and to reason about quality would take formal methods a significant step forward. 
Quality has many aspects, some of which are propositional, such as prioritizing one satisfaction scheme on top of another, and some are temporal, for example having higher quality for implementations with shorter delays. In this work we provided a solution for specifying and reasoning about temporal quality, augmenting the commonly used linear temporal logic (LTL). A satisfaction scheme, such as ours, that is based on elapsed times introduces a big challenge, as it implies infinitely many satisfaction values. Nonetheless, we showed the decidability of the model-checking problem, and for the natural exponential-decaying satisfactions, the complexity remains as the one for standard LTL, suggesting the interesting potential of the new scheme. As for combining propositional and temporal quality operators, we showed that the problem is, in general, undecidable, while certain combinations, such as adding priorities, preserve the decidability and the complexity.

\gap\noindent
{\bf Acknowledgement.} We thank Eleni Mandrali for pointing to an error in an earlier version of the paper.
\tiny
\bibliography{../ok}
\normalsize

\newpage
\appendix
\section{Proofs}

\subsection{Proof of Lemma~\ref{lem:LTL for positive}}

We construct $\pos{\phi}$ and $\notone{\phi}$ by induction on the structure of $\phi$ as follows. In all cases but the $\Until$ case we do not use the assumption that $\pi=u\cdot v^\omega$ and prove an ``iff'' criterion. 
\begin{itemize}
\item If $\phi$ is of the form $\True,\False$, or $p$, for an atomic proposition $p$, then $\pos{\phi}=\phi$ and $\notone{\phi}=\neg \phi$. Correctness is trivial.

\item If $\phi$ is of the form $\psi_1\vee \psi_2$, then $\pos{\phi}=\pos{\psi_1}\vee \pos{\psi_2}$  and $\notone{\phi}=\notone{\psi_1}\wedge \notone{\psi_2}$. Indeed, for every computation $\pi$ we have that $\sem{\pi,\phi}>0$ iff either $\sem{\pi,\psi_1}>0$ or $\sem{\pi,\psi_2}>0$, and  $\sem{\pi,\phi}<1$ iff both $\sem{\pi,\psi_1}<1$ and $\sem{\pi,\psi_2}<1$.

\item If $\phi$ is of the form $\Next \psi_1$, then $\pos{\phi}=\Next (\pos{\psi_1})$ and $\notone{\phi}=\Next (\notone{\psi_1})$. Correctness is trivial.

\item If $\phi$ is of the form $\psi_1\Until \psi_2$, then $\pos{\phi}=\pos{\psi_1}\Until \pos{\psi_2}$ and $\notone{\phi}=\neg((\neg (\notone{\psi_1}))\Until (\neg (\notone{\psi_2})))$. 

We start with $\pos{\phi}$. For every computation $\pi$ we have that $\sem{\pi,\phi}>0$ iff there exists $i\ge 0$ such that $\sem{\pi^i,\psi_2}>0$ and for every $0\le j<i$ it holds that $\sem{\pi^j,\psi_1}>0$. This happens iff $\pi$ satisfies $\pos{\psi_1}\Until \pos{\psi_2}$. 

Before we turn to the case of $\notone{\phi}$, let us note that
readers familiar with the release ($\Release$) operator of $\LTL$ may find it clearer to observe that $\notone{\phi}=\notone{\psi_1}\Release \notone{\psi_2}$, which perhaps gives a clearer intuition for the correctness of the construction.

Now, if $\sem{\pi,\phi}<1$, then for every $i\ge 0$ it holds that either $\sem{\pi^i,\psi_2}<1$ or $\sem{\pi^j,\psi_1}<1$ for some $0\le j<i$. Thus, for every $i\ge 0$, either $\pi^i\models \notone{\psi_2}$, or $\pi^j\models \notone{\psi_1}$ for some $0\le j<i$. So for every $i\ge 0$, either $\pi^i\not \models \neg (\notone{\psi_2})$, or $\pi^j\not\models \neg( \notone{\psi_1})$ for some $0\le j<i$. It follows that $\pi\not \models (\neg (\notone{\psi_1}))\Until (\neg (\notone{\psi_2}))$. Equivalently, $\pi\models \neg((\neg (\notone{\psi_1}))\Until (\neg (\notone{\psi_2})))$.

Conversely, if $\pi=u\cdot v^\omega$ and $\pi\models \neg((\neg (\notone{\psi_1}))\Until (\neg (\notone{\psi_2})))$, then $\pi\not \models (\neg (\notone{\psi_1}))\Until (\neg (\notone{\psi_2}))$, so for every $i\ge 0$, either $\pi^i\models \notone{\psi_2}$ or $\pi^j\models \notone{\psi_1}$ for some $0\le j< i$. By the induction hypothesis, for every $i\ge 0$, either $\sem{\pi^i,\psi_2}<1$ or $\sem{\pi^j,\psi_1}<1$ for some $0\le j< i$. We now use the assumption that $\pi=u\cdot v^\omega$ to observe that the $\sup$ in the expression for $\sem{\pi,\phi}$ is attained as a $\max$, as there are only finitely many distinct suffixes for $\pi$ (namely $\pi^0,...,\pi^{|u|+|v|-1}$). Thus, since all the elements in the $\max$ are strictly smaller than $1$, we conclude that $\sem{\pi,\phi}<1$.

\item  If $\phi=\neg \psi$, then $\pos{\phi}=\notone{\psi}$ and $\notone{\phi}=\pos{\psi}$. Again, correctness is trivial.

\item
If $\phi=\psi_1\Until_\df \psi_2$ for $\df \in \D$, then
$\pos{\phi}=\pos{\psi_1}\Until \pos{\psi_2}$. Indeed, since $\df(i) > 0$ for all $i \geq 0$, then 
$\pos{\phi}=\pos{\psi_1\Until \psi_2}$.

Now, $\notone{\phi}$ is defined as follows. First, if $\df(0)<1$, then $\notone{\phi}=\True$. If $\df(0)=1$, then $\notone{\phi}=\notone{\psi_2}$. Indeed, since $\df$ is strictly decreasing, the only chance of $\phi$ to have $\sem{\pi,\phi}=1$ is when both $\df(0)=1$ and $\sem{\pi^0,\psi_2}=1$. Since a satisfaction value cannot exceed $1$, the latter happens iff $\df(0)=1$ and $\pi\not\models \notone{\psi_2}$ (where the ``only if'' direction is valid when $\pi$ is a lasso, as is assumed). 

%
\end{itemize}

Finally, it is easy to see that $|\pos{\phi}|$ and $|\notone{\phi}|$ are both $O(|\phi|)$.

\subsection{The standard translation of $\LTL$ to $\AWW$}
\label{apx: standard LTL to AWW}
For completeness, we bring here the construction of the translation from $\LTL$ to $\AWW$, which we use in Theorem~\ref{thm:DLTL to AWW}. For the correctness proof, see e.g.~\cite{Var96}.

Given an \LTL formula $\phi$ over the atomic propositions $AP$, we construct an \AWW $\A_\phi=\zug{Q,2^{AP},Q_0,\delta,\alpha}$ as follows. The state space $Q$ consists of all the subformulas of $\phi$, and their negations (we identify $\neg\neg \psi$ with $\neg \psi$). The initial state is $\phi$, and the accepting states are all the formulas of the form $\neg(\psi_1\Until\psi_2)$. It remains to define the transition function.

We start with a few notations. For a Boolean formula $\theta$ over $Q$, we define its {\em dual formula} $\overline{\theta}$ by induction over the construction of $\theta$, as follows.
\begin{itemize}
\item For $\psi\in Q$ we have $\overline{\psi}=\neg \psi$.
\item $\overline{\True}=\False$ and $\overline{\False}=\True$.
\item $\overline{\psi_1\vee\psi_2}=\overline{\psi_1}\wedge \overline{\psi_2}$ and $\overline{\psi_1\wedge\psi_2}=\overline{\psi_1}\vee \overline{\psi_2}$
\end{itemize}
The transition function can now be defined as follows. Let $\psi\in Q$ and $\sigma\in 2^{AP}$.
\begin{itemize}
\item If $\psi=p\in AP$, then $\delta(\psi,\sigma)=\begin{cases}
\True & p\in \sigma,\\
\False & p\notin \sigma,
\end{cases}$
\item If $\psi=\zeta _1\vee\zeta_2$, then $\delta(\psi,\sigma)=\delta(\zeta_1,\sigma)\vee\delta(\zeta_2,\sigma)$,
\item If $\psi=\neg \zeta$, then $\delta(\psi,\sigma)=\overline{\delta(\zeta,\sigma)}$,
\item If $\psi=\Next \zeta$, then $\delta(\psi,\sigma)=\zeta$,
\item If $\psi=\zeta_1\Until\zeta_2$, then $\delta(\psi,\sigma)=\delta(\zeta_2,\sigma)\vee(\delta(\zeta_1,\sigma)\wedge \zeta_1\Until\zeta_2)$.
\end{itemize}

\subsection{Continuation of the Proof of Theorem~\ref{thm:DLTL to AWW}}
We continue the proof that is given in the main text, showing that the constructed $\AWW$ $\A_{\phi,v}$ is indeed finite and correct.

We first make some notations and observations regarding the structure of $\A_{\phi,v}$. For every state $(\psi > t)$ (resp. $(\psi<t)$) we refer to $\psi$ and $t$ as the state's {\em formula} and {\em threshold}, respectively. If the outermost operator in $\psi$ is a discounting operator, then we refer to its discounting function as the state's {\em discounting function}. For states of Type-2 we refer to their {\em formula} only (as there is no threshold).

First observe that the only cycles in $\A_{\phi,v}$ are self-loops. Indeed, consider a transition from state $q$ to state $s\neq q$. Let $\psi_q,\psi_s$ be the formulas of $q$ and $s$, respectively.
Going over the different transitions, one may see that either $\psi_s$ is a strict subformula of $\psi_q$, or $s$ is a Type-2 state, or both $\psi_q$ and $\psi_s$ have an outermost discounting operator with discounting functions $\df$ and $\df^{+1}$ respectively. By induction over the construction of $\phi$, this observation proves that there are only self-cycles in $\A_{\phi_v}$.

We now observe that in every run of $\A_{\phi,v}$ on an infinite word $w$, every infinite branch (i.e., a branch that does not reach $\True$) must eventually be in a state of the form $\psi_1\Until\psi_2>t$, $\psi_1\Until\psi_2<t$, $\psi_1\Until\psi_2$ or $\neg(\psi_1\Until\psi_2)$ (if it's a Type-2 state). Indeed, these states are the only states that have a self-loop, and the only cycles in the automaton are self-loops.

We start by proving that there are finitely many states in the construction. First, all the sub-automata that correspond to Type-2 states have $O(|\varphi|)$ states. This follows immediately from Lemma~\ref{lem:LTL for positive} and from the construction of an $\AWW$ from an $\LTL$ formula.

Next, observe that the number of possible state-formulas, up to differences in the discounting function, is $O(|\phi|)$. Indeed, this is simply the standard closure of $\phi$. It remains to prove that the number of possible thresholds and discounting functions is finite.

We start by claiming that for every threshold $t>0$, there are only finitely many reachable states with threshold $t$. Indeed, for every discounting function $\df\in \D$ (that appears in $\phi$), let $i_{t,\df}=\max\set{i: \frac{t}{\df(i)}\le 1}$. The value of $i_{t,\df}$ is defined, since the functions tend to $0$. Observe that in every transition from a state with threshold $t$, if the next state is also with threshold $t$, then the discounting function (if relevant) is either some $\df'\in \D$, or $\df^{+1}$. There are only finitely many functions of the former kind. As for the latter kind, after taking $\df^{+1}$ $i_{t,\df}$ times, we have that $t/\df^{+i_{t,\df}}(0)>1$. By the definition of $\delta$, in this case the transitions are to the Boolean-formula states (i.e., $\True,\False$, or some $\pos{\psi}$), from which there are finitely many reachable states. We conclude that for every threshold, there are only finitely many reachable states with this threshold.

Next, we claim that there are only finitely many reachable thresholds. This follows immediately from the claim above. We start from the state $\phi>v$. From this state, there are only finitely many reachable discounting functions. The next threshold that can be encountered is either $1-v$, or $\frac{v}{\df(0)}$ for $\df$ that is either in $\D$ or one of the $\df^{+i}$ for $i\le i_{v,\df}$. Thus, there are only finitely many such thresholds. Further observe that if a different threshold is encountered, then by the definition of $\delta$, the state's formula is deeper in the generating tree of $\phi$. Thus, there are only finitely many times that a threshold can change along a single path. So by induction over the depth of the generating tree, we can conclude that there are only finitely many reachable thresholds.

We conclude that the number of states of the automaton is finite.

Next, we prove the correctness of the construction. From Lemma~\ref{lem:LTL for positive} and the correctness of the standard translation of $\LTL$ to $\AWW$, it remains to prove that for every path $\pi$ and for every state $(\psi > v)$ (resp. $(\psi < v)$):
\begin{enumerate}
\item If $\sem{\pi,\psi}>v$ (resp. $\sem{\pi,\psi}<v$), then $\pi$ is accepted from $(\psi > v)$ (resp. $(\psi < v)$).
\item If $\pi=u\cdot v^\omega$ and $\pi$ is accepted from state $(\psi > v)$ (resp. $(\psi < v)$) then $\sem{\pi,\psi}>v$ (resp. $\sem{\pi,\psi}<v$).
\end{enumerate} 

The proof is by induction over the construction of $\phi$, and is fairly trivial given the definition of $\delta$.


\subsection{Proof of Lemma~\ref{lem:DLTL to NBW}}
We start by defining {\em generalized B\"uchi automata}. An $\NGBW$ is $\A=\zug{Q,\Sigma,\delta,Q_0,\alpha}$, where $Q,\Sigma,\delta,Q_0$ are as in $\NBW$. The acceptance condition is $\alpha=\set{F_1,...,F_k}$ where $F_i\subseteq Q$ for every $1\le i\le k$. A run $r$ of $\A$ is accepting if for every $1\le i\le k$, $r$ visits $F_i$ infinitely often.

We now proceed with the proof.

Consider the $\AWW$ $\A$ obtained from $\phi$ using the construction of Section~\ref{sec:DltlToAww}.

In the translations of $\AWW$ to $\NBW$ using the method of~\cite{GO01}, the $\AWW$ is translated to an $\NGBW$ whose states are the subset-construction of the $\AWW$.
This gives an exponential blowup in the size of the automaton. We claim that in our translation, we can, in a sense, avoid this blowup.

Intuitively, each state in the $\NGBW$ corresponds to a conjunction of states of the $\AWW$. Consider such a conjunction of states of $\A$. If the conjunction contains two states $(\psi<t_1)$ and $(\psi < t_2)$, and we have that $t_1<t_2$, then by the correctness proof of Theorem~\ref{thm:DLTL to AWW}, it holds that a path $\pi$ is accepted from both states, iff $\pi$ is accepted from $(\psi<t_1)$. Thus, in every conjunction of states from $\A$, there is never a need to consider a formula with two different ``$<$'' thresholds. Dually, every formula can appear with at most one ``$>$'' threshold.

Next, consider conjunctions that contain states of the form $(\psi_1\Until_\df \psi_2<t)$ and $(\psi_1\Until_{\df^{+k}} \psi_2<t)$. Again, since the former assertion implies the latter, there is never a need to consider two such formulas. Similar observations hold for the other discounting operators.

Thus, we can restrict the construction of the $\NGBW$ to states that are conjunctions of states from the $\AWW$, such that no discounting operator appears with two different ``offsets''.

Further observe that by the construction of the $\AWW$, the threshold of a discounting formula does not change, with the transition to the same discounting formula, only the offset changes. That is, from the state $(\psi_1\Until_\df \psi_2<t)$, every reachable state whose formula is $\psi_1\Until_{\df^{+k}} \psi_2$ has threshold $t$ as well. Accordingly, the possible number of thresholds that can appear with the formula $\psi_1\Until_\df \psi_2$ in the subset construction of $\A$, is the number of times that this formula appears as a subformula of $\phi$, which is $O(|\phi|)$.

We conclude that each state of the obtained $\NGBW$ is a function that assigns each subformula\footnote{where a subformula may have several occurences, e.g., in the formula $p\wedge \Next p$ we have two occurences of the subformula $p$} of $\phi$ two thresholds.
The number of possible thresholds and offsets is linear in the number of states of $\A$, thus, the number of states of the $\NGBW$ is $|\A|^{O(|\phi|)}$.

Finally, translating the $\NGBW$ to an $\NBW$ requires multiplying the size of the state space by $|\A|$, so the size of the obtained $\NBW$ is also $|\A|^{O(|\phi|)}$.

\subsection{Proof of Theorem~\ref{thm: exp disc to AWW}}
We construct an \AWW $\A=\A_{\phi,v}$ as per Section~\ref{sec:DltlToAww}, with some changes.

Recall that the ``interesting'' states in $\A$ are those of the form $\psi_1\Until_{\df^{+i}}\psi_2$.
Observe that for the function $\dfl{\lambda}$ it holds that $\dfl{\lambda}^{+i}=\lambda^i\cdot \dfl{\lambda}$. Accordingly, we can replace a state of the form $\psi_1\Until_{\dfl{\lambda}^{+i}}\psi_2<t$ with the state $\psi_1\Until_{\dfl{\lambda}}
\psi_2<\frac{t}{\lambda^i}$, as they express the same assertion.
Finally, notice that $\dfl{\lambda}(0)=1$. Thus, we can simplify the construction of $\A$ with the following transitions:

Let $\sigma\in 2^{AP}$, then we have that
\begin{itemize}
\item
$\delta((\psi_1\Until_{\dfl{\lambda}}
\psi_2>t),\sigma)=$\\$\left[ \begin{array}{ll}
\delta((\psi_2> t),\sigma)\vee & \\

\ \ [\delta((\psi_1>t),\sigma)\wedge (\psi_1\Until_{\dfl{\lambda}}
\psi_2> \frac{t}{\lambda})] & \mbox{ if $0<t< 1$},\\
\False &  \mbox{ if $t \geq 1$},\\
\delta((\pos{(\psi_1\Until_{\dfl{\lambda}}
\psi_2)}),\sigma) & \mbox{ if $t=0$}.
\end{array}
\right.
$

\item
$\delta((\psi_1\Until_{\dfl{\lambda}}\psi_2<t),\sigma)=$\\$\left[ \begin{array}{ll}
\delta((\psi_2<t),\sigma)\wedge & \\

\ \ [\delta((\psi_1<t),\sigma)\vee (\psi_1\Until_{\dfl{\lambda}}
\psi_2<\frac{t}{\lambda})] & \mbox{ if $0<t\le 1$},\\
\True &  \mbox{ if $t> 1$},\\
\False & \mbox{ if $t=0$ }.
\end{array}
\right.
$

\item
$\delta((\psi_1\Until_{\dfl{\lambda}}
\psi_2>t),\sigma)=$\\$\left[ \begin{array}{ll}
	\delta((\psi_2>t),\sigma)\wedge & \\

\ \ [\delta((\psi_1>t),\sigma)\vee (\psi_1\Until_{\dfl{\lambda}}
\psi_2>\frac{t}{\lambda})] & \mbox{ if $0<t< 1$},\\
\False &  \mbox{ if $t \geq 1$},\\
\delta((\pos{(\psi_1\Until_{\dfl{\lambda}}
\psi_2)}),\sigma) & \mbox{ if $t=0$}.
\end{array}
\right.
$

\item
$\delta((\psi_1\Until_{\dfl{\lambda}}\psi_2<t),\sigma)=$\\$\left[ \begin{array}{ll}
\delta((\psi_2<t),\sigma)\vee & \\

\ \ [\delta((\psi_1<t),\sigma)\wedge (\psi_1\Until_{\dfl{\lambda}}
\psi_2<\frac{t}{\lambda})] & \mbox{ if $0<t\le 1$},\\
\True &  \mbox{ if $t> 1$},\\
\False & \mbox{ if $t=0$}.
\end{array}
\right.
$
\end{itemize}

The correctness and finiteness of the construction follows from Theorem~\ref{thm:DLTL to AWW}, with the observation above. We now turn to analyze the number of states in $\A$.

For every state $(\psi > t)$ (resp. $(\psi<t)$) we refer to $\psi$ and $t$ as the state's {\em formula} and {\em threshold}, respectively. Observe that the number of possible state-formulas is $O(|\phi|)$.
Indeed, the formulas in the states are either in the closure of $\phi$, or are of the form $\pos{\psi}$, where $\psi$ is in the closure of $\phi$. This is because in the new transitions we do not carry the offset, but rather change the threshold, so the state formula does not change.

It remains to bound the number of possible thresholds.
Consider a state with threshold $t$. In every succeeding state\footnote{This is almost correct. In fact, since $\delta$ is defined inductively, we may go through several transitions.}, the threshold (if exists) can either remain $t$, or change to $1-t$ (in case of negation), or $t/\lambda$, where $\lambda\in \fac(\phi)$, providing $t <1$.

Initially, we ignore negations. In this case, the number of states that can be reached from a threshold $t$ is bounded by the size of the set
$$\set{\prod_{i=1}^k \lambda_i: \prod_{i=1}^k \lambda_i>t,\ k\in \Nat, \forall i\ \lambda_i\in \fac(\phi)}$$
This length of the products can be bounded by $\log_\mu t=\frac{\log t}{\log \mu}=O(\log t)$, where $\mu=\max\fac(\phi)$. Thus, the number of possible values is bounded by $(\log_\mu t) ^{|\fac(\phi)|}$.
From here we denote $\log_\mu t$ by $\ell$.

Next, we consider negations. Observe that in every path, there are at most $|\phi|$ negations. In every negation, if the current state has threshold $s$, the threshold changes to $1-s$. We already proved that from threshold $t$ we can get at most $(\ell) ^m$ states. Furthermore, every such state has a threshold of the form
$$\frac{t}{\prod_{i=1}^{\ell} \lambda_i}$$
where $\lambda_i\in\fac(\phi)\cup\set{1}$ (we allow 1 instead of allowing shorter products).

Consider a negation state with threshold $s=\frac{t}{\prod_{i=1}^{\ell} \lambda_i}$. If $s=1$ then in the next state the threshold is $0$, and every reachable state is part of a Boolean $\AWW$ corresponding to an $\LTL$ formula, and thus has polynomially many reachable states.

Otherwise, we have that $s<1$. Denote $\fac(\phi)=\set{\lambda_1,...,\lambda_m}$, and assume that $t,\lambda_1,...,\lambda_m$ can be written as $t=\frac{p_t}{q_t}$ and for all $i$, $\lambda_i=\frac{p_i}{q_i}$ (which is possible as they are in $\Rat$), then we have that

$\log(1-s)=\log(1-\frac{t}{\prod_{i=1}^{\ell} \lambda_i})=\log (\prod_{i=1}^{\ell}\lambda_i-t)-\log(\prod_{i=1}^{\ell}\lambda_i)$ $>\log (\prod_{i=1}^{\ell}\lambda_i-t)$,
where the last transition is because $\log(\prod_{i=1}^{\ell}\lambda_i)<0$.

Let $q_{max}=\max\set{q_1,...,q_m}$, we now observe that
\begin{align*}
&\prod_{i=1}^{\ell}\lambda_i-t=\prod_{i=1}^{\ell}\frac{p_i}{q_i}-\frac{p_t}{q_t}=
\frac{\prod_{i=1}^{\ell}p_iq_t- \prod_{i=1}^{\ell}q_ip_t}{\prod_{i=1}^{\ell} q_iq_t}\ge\\  
&\frac{1}{\prod_{i=1}^{\ell} q_iq_t}\ge \frac{1}{q_{max}^{\ell }q_t}
\end{align*}

where the last transitions are because all the numbers are natural, and $s>1$, so the numerator must be a positive integer.
From this we get that
$$\log(1-s)>\log\left( \frac{1}{q_{max}^{\ell }q_t}\right)=-(\ell\cdot (\log q_{max}+\log q_t))$$
We see that we can bound $1-s$ from below by a rational number whose representation is linear in that of $t$ and $|\zug{\phi}|$. Denote this linear function by $f(n)$.
Thus, continuing in this manner through $O(\phi)$ negations, starting from the threshold $v$, we have that the number of thresholds reachable from every state is bounded by
$(\underbrace{f(f(...f(\log v)))}_{O(\varphi)})^m$
which is single exponential in $|\zug{\varphi}|$ and the description of $v$.

We conclude that the number of states of the automaton is single-exponential.

\subsection{Proof of Theorem~\ref{thm:synthesis}}
\label{apx: synthesis}
Recall that if $\pi$ is a computation such that $\sem{\pi,\phi}>v$, hen $\A_{\phi,v}$ accepts $\pi$. The converse however, is not true. Still, by carefully examining the construction in Theorem~\ref{sec:DltlToAww}, we observe that if $\A_{\phi,v}$ accepts a computation $\pi$, then $\sem{\pi,\phi}\ge v$ (note the non-strict inequality). 

Assume a partition of the letters in $AP$ to input and output signals, denoted $I$ and $O$, respectively. By following standard (Boolean) procedures for synthesis (see \cite{PR89a}), we can generate from $\A_{\phi,v}$ a deterministic tree automaton $\D$ that accepts a $2^O$-labeled $2^I$-tree iff all the paths along the tree are accepted in $\A_{\phi,v}$. A tree that is accepted in this manner represents a transducer that realizes $\phi$ with value at least $v$. Accordingly, if there exists a transducer $\T$, all of whose computations satisfy $\sem{\pi,\phi}>v$, then the tree that represents such a transducer is accepted in $\D$. Thus, the language of $\D$ is non-empty, and a non-emptiness witness is a transducer (tree) $\T'$, all of whose computations 
satisfy $\sem{\pi,\phi}\ge v$.


We remark that this solution is only partial, as there might be an $I/O$-transducer whose computations $\pi$ all satisfy $\sem{\pi,\phi}\ge v$, but we will not find it.

\subsection{Translating $\DPLTL$ formulas to $\twAWW$}
\label{apx:LTL with past}
As we now show, the construction we use when working with the ``infinite future'' $\Until_\df$ operator is similar to that of the one we use for the ``finite past'' $\Since_\df$ operator.
The key for this somewhat surprising similarity is the fact our construction is based on a threshold. Under this threshold, we essentially bound the future that needs to be considered, thus the fact that it is technically infinite plays no role.





A $\twAWW$ is a tuple $\A=\zug{\Sigma,Q,q_0,\delta,\alpha}$ where $\Sigma,Q,q_0,\alpha$ are as in $\AWW$. The transition function is $\delta:Q\times \Sigma\to \B^+(\set{-1,1}\times Q)$. That is, positive Boolean formulas over atoms of the form $\set{-1,1}\times Q$, describing both the state to which the automaton moves and the direction in which the reading head proceeds.

As in $\DLTL$, the construction extends the construction for the Boolean case. 
It is not hard to extend Lemma~\ref{lem:LTL for positive} and generate Boolean PLTL formulas for satisfaction values in $\set{0,1}$. 

Given a $\DPLTL$ formula $\phi$ and a threshold $t\in [0,1)$, we construct a $\twAWW$ as in
Theorem~\ref{thm:DLTL to AWW} with the following additional transitions:\footnote{In addition, the atoms in the transitions in Theorem~\ref{thm:DLTL to AWW} are adjusted to the $\twAWW$ syntax by replacing each atom $q$ by the atom $\zug{1,q}$.}.
\begin{itemize}
\item $\delta((\Yest \psi > t),\sigma)=\zug{-1,(\psi>t)}$
\item $\delta((\Yest \psi < t),\sigma)=\zug{-1,(\psi<t)}$.
\item $\delta((\psi_1\Since \psi_2 >t),\sigma)=\delta((\psi_2>t),\sigma)\vee (\delta((\psi_1>t),\sigma) \wedge \zug{-1,(\psi_1\Since \psi_2>t)})$.
\item $\delta((\psi_1\Since \psi_2 <t),\sigma)=\delta((\psi_2<t),\sigma)\wedge (\delta((\psi_1<t),\sigma) \vee \zug{-1,(\psi_1\Since \psi_2<t)})$.

\item $\delta((\psi_1\Since_\df \psi_2 >t),\sigma)=$
\\ \spb
$\left[ \begin{array}{ll}
\delta((\psi_2>\frac{t}{\df(0)}),\sigma)\vee 
[\delta((\psi_1>\frac{t}{\df(0)}),\sigma) \wedge \zug{-1,(\psi_1\Since_{\df^{+1}} \psi_2>t)}] &  \mbox{ if $0<\frac{t}{\df(0)}< 1$},\\
\False & \mbox{ if $\frac{t}{\df(0)} \geq 1$},\\
\delta((\pos{(\psi_1\Since_\df \psi_2)}),\sigma) & \mbox{ if $\frac{t}{\df(0)}=0$}.
\end{array}
\right.$

\item $\delta((\psi_1\Since_\df \psi_2 <t),\sigma)=$
\\ \spb
$\left[ \begin{array}{ll}
\delta((\psi_2<\frac{t}{\df(0)}),\sigma)\wedge [\delta((\psi_1<\frac{t}{\df(0)}),\sigma) \vee \zug{-1,(\psi_1\Since_{\df^{+1}} \psi_2<t)}] &  \mbox{ if $0<\frac{t}{\df(0)}\le 1$},\\
\True & \mbox{ if $\frac{t}{\df(0)} > 1$},\\
\False & \mbox{ if $\frac{t}{\df(0)}=0$ }.
\end{array}
\right.$
\end{itemize}


The correctness of the construction and the analysis of the blowup
 are similar to those in Section~\ref{sec:alg proc}. 

\subsection{Correctness Proof of the Construction in Section~\ref{sec:DiscountingTendency}}
\label{apx:disc tendency proof}
Consider the case of $(\psi_1\parUntil_{\df,z} \psi_2>t)$ (the dual case is similar). We check when this assertion holds.

First, if $z=t$, then this assertion is equivalent to $\psi_1\Until\psi_2>t$ (this follows directly from the semantics).
Thus, assume that $z\neq t$.

If $\tau<0$, then $t<(1-\df(0))z$, so in particular, for every value of $\sem{\pi^0,\psi_2}$, it holds that $t<\df(0)\sem{\pi^0,\psi_2}+(1-\df(0))z$, so the assertion is true, since the first operand in the $\sup$ is greater than $t$.

If $\tau\ge 1$ then $\df(0)+(1-\df(0))z\le t$, so both
$\df(0)\sem{\pi^0,\psi_2}+(1-\df(0))z\le t$ and $\df(0)\sem{\pi^0,\psi_1}+(1-\df(0))z\le t$. Thus, every operand in the $\sup$ has an element less than $t$, so the $\sup$ cannot be greater than $t$, so the assertion is false.

If $0\le \tau<1$, then similarly to the case of $\Until_\df$ - the assertion is equivalent to the following: either $\psi_2>\tau$, or both $\psi_1>\tau$ and $(\psi_1\parUntil_{\df^{+1},z} \psi_2>t)$.

It remains to show that there are still only finitely many states. Since $z\neq t$, we get that $\lim_{\df(0)\to 0}\tau=\begin{cases}
\infty & t>z\\
-\infty & t<z
\end{cases}.$
Thus, after a certain number of transitions, $\tau$ takes a value that is not in $[0,1]$, in which case the next state is $\True$ or $\False$, so the number of states reachable from $\phi>t$ is finite.

We remark that this is not true if $z=t$, which is why we needed to treat this case separately.
\end{document}